\newtheorem{lemma}{Lemma}
\newtheorem{corollary}{Corollary}
\newtheorem{theorem}{Theorem}
\newtheorem{observation}{Observation}
\newcommand{\area}{{\rm area}}
\newcommand{\eps}{\varepsilon}
\title{Maximum Area Axis-Aligned Square Packings\thanks{Research supported in part by the NSF awards CCF-1422311 and CCF-1423615. The first author was supported by the Science Without Borders program.}}
\date{}
\author{
	Hugo A. Akitaya\thanks{Tufts University, Medford, MA, USA.
		\protect\url{hugo.alves_akitaya@tufts.edu},\protect\url{matthew.jones@tufts.edu}}
	\and
	Matthew D. Jones\footnotemark[2]
	\and
	David Stalfa~\thanks{Department of Computer Science, Northeastern University, Boston, MA, USA. \protect\url{stalfa@ccis.neu.edu}}
	\and
	Csaba D. T\'oth\footnotemark[2]~\thanks{California State University Northridge, Los Angeles, CA, USA. \protect\url{csaba.toth@csun.edu}}
}
\begin{document}
\nolinenumbers
\maketitle
%\linenumbers

\begin{abstract}
Given a point set $S=\{s_1,\ldots , s_n\}$ in the unit square $U=[0,1]^2$, an \emph{anchored square packing} is a set of $n$ interior-disjoint empty squares in $U$ such that $s_i$ is a corner of the $i$th square. The \emph{reach} $R(S)$ of $S$ is the set of points that may be covered by such a packing, that is, the union of all empty squares anchored at points in $S$.

It is shown that $\area(R(S))\geq \frac12$ for every finite set $S\subset U$, and this bound is the best possible. The region $R(S)$ can be computed in $O(n\log n)$ time. Finally, we prove that finding a maximum area anchored square packing is NP-complete. This is the first hardness proof for a geometric packing problem where the size of geometric objects in the packing is unrestricted.
\end{abstract}

\section{Introduction}
\label{sec:intro}

Let $S=\{s_1,\ldots , s_n\}$ be a set of $n$ points in the unit square $U=[0,1]^2$. We say that a square $q$ is \emph{empty} if no point in $S$ lies in the interior of $q$, and $q$ is \emph{anchored} at a point $s$ if one of its four corners is $s$. An \emph{anchored square packing for $S$} is a set $Q=\{q_1,\ldots , q_n\}$ of  interior-disjoint axis-aligned empty squares that lie in $U$ such that $q_i$ is anchored at $s_i$ for $i=1,\ldots , n$. A \emph{lower-left} anchored square packing is an anchored square packing in which $s_i$ is the lower-left corner of $q_i$, for $i=1,\ldots , n$~\cite{BT16}. No polynomial-time algorithm is known for computing the maximum area of an anchored square packing for a given point set $S$; the problem admits a PTAS using a reduction to the maximum weight independent set problem (MWIS)~\cite{BDT17}. The empty squares anchored at $S$ do not always cover $U$ entirely (Fig.~\ref{fig:upperBound}(a)). For finding a maximum anchored square packing for $S$, it suffices to consider the subset of $U$ that can be reached by anchored empty squares. Specifically, we define the \emph{reach} of $S$, denoted $R(S)$, as the union of all axis-aligned empty squares contained in $U$ and anchored at some point in $S$.

For computing the reach $R(S)$, we can take the union of all maximal empty squares anchored at the points in $S$, as follows. For $i=1,\ldots ,n$, let $q_i^1$ be the maximal axis-aligned empty square in $U$ whose lower-left corner is $s_i$, and similarly define $q_i^2$, $q_i^3$, and $q_i^4$ where $s_i$ is the upper-left, upper-right, and lower-right corner, respectively.
We say that a point $s \in S$ \emph{blocks} a square $q_i^1$ if $s$ is incident to the top or right edge of $q_i^1$.
Similarly, $s$ blocks $q_i^j$ if $j=2$ (resp., 3, 4) and $s$ is incident to the bottom or right edges  of $q_i^j$ (resp.,  bottom or left edges, or top or left edges of $q_i^j$).
It is now clear that $R(S)=\bigcup_{i=1}^n \bigcup_{j=1}^4 q_i^j$.

\begin{figure}[h]
	\centering
\includegraphics[width=.9\textwidth]{./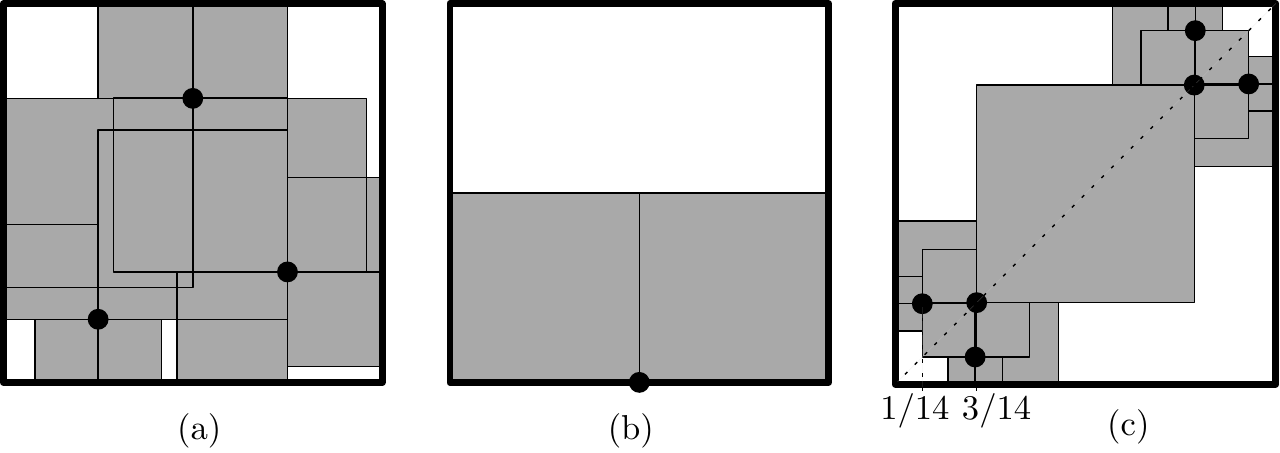}
	\caption{(a) The reach $R(S)$ for a set $S$ of three points.
             (b) The area of $R(S)$ is $\frac12$ for $S=\{(\frac12,0)\}$.
             (c) The reach $R(S)$ touches all four sides of $U$, and its area is $\frac{4}{7}$.}
	\label{fig:upperBound}
\end{figure}

\subparagraph{Summary of Results.}
We prove that for every finite set $S\subset U$, the area of $R(S)$ is at least $\frac{1}{2}$, and this bound is the best possible (Section~\ref{sec:lowerbound}). This settles in the affirmative a conjecture by Balas et al.~\cite{BDT17}. We show how to compute $R(S)$ in $O(n\log n)$ time where $n=|S|$ (Section~\ref{sec:algorithm}). We also show that finding the maximum area anchored square packing for a given point set $S$ is NP-complete (Section~\ref{sec:hardness}). This is the first NP-hardness result for a geometric packing problem, where the
size of the geometric objects in the packing is unrestricted.
We conclude with related open problems (Section~\ref{sec:open}).

\subparagraph{Motivation and Related Previous Work.}
Geometric packing and covering problems have a long and revered history, going back to Kepler's problem about the densest packing of congruent balls in Euclidean space. In a classical packing problem, we are given a container region $C$, and a set $O$ of geometric objects, and we wish to find a maximum subset $O'\subseteq O$ such that congruent copies (or translates) of the objects in $O'$ fit in $C$ without overlap.

Anchored variants, where each geometric object needs to contain a given point (\emph{anchor}) initially emerged in VLSI design, where the anchors represent the endpoints of wires. Allen Freedman~\cite{Tu69} conjectured that for every finite set $S\subset [0,1]^2$, which contains the origin (i.e., $\mathbf{0}\in S)$, there is a lower-left anchored rectangle packing of area at least $\frac12$. This lower bound would match an easy upper bound construction, where $n$ points are equally distributed on the diagonal. The current best lower bound is $0.091$~\cite{DT15}.

More recently, a broad family of anchored packing problems were proposed in the context of \emph{map labeling}, where the anchors represent cities in a map, and axis-aligned rectangles represent labels~\cite{FW91,IL03,JC04,KT13,KR92,KNN+02,KSW99}. Variants of the problem require the anchor to be at a corner, at a side, or anywhere in the rectangle, and the objective is to maximize the number of labels that can be packed in the map. Many of these problems are known to be NP-complete. However, in all previous reductions, the label boxes have a finite number of possible sizes~\cite{FW91,KR92,KSW99} or bounded size~\cite{IL03}.

In this paper, we consider the variant of Freedman's problem: We need to place an axis-aligned square at each anchor, and the sizes of the squares are not given in advance. Our objective is to maximize the total area of an anchored square packing. Balas et al.~\cite{BDT17} showed that a greedy strategy finds an $\frac{5}{32}$-approximation, and a reduction to MWIS yields a PTAS that achieves an $(1-\varepsilon)$-approximation in time $n^{O(1/\varepsilon)}$. It is known that the number of maximum-area square anchored packings may be exponential in $n$~\cite{BT16}.

\section{The Minimum Area of the Reach}
\label{sec:lowerbound}

In this section, we prove $\area(R(S))\geq \frac{1}{2}$ for every set $S$ of $n$ points in $U=[0,1]^2$ (Theorem~\ref{thm:area}). Note that this bound is the best possible for all $n\in \mathbb{N}$. Indeed, if $S$ is the one-element set $S=\{(\frac{1}{2},0)\}$, then $\area(R(S))=\frac{1}{2}$; see Fig.~\ref{fig:upperBound}(b). By placing $n$ points in an $\varepsilon$-neighborhood of $(\frac{1}{2},0)$ in $U$, we see that for every $\varepsilon>0$ and every $n\in \mathbb{N}$, there exists a set $S$ of $n$ points in $U$ such that $\area(R(S))<\frac{1}{2}+\varepsilon$. Note that in this construction all maximal anchored squares are disjoint from the top side of $U$. Under this constraint, the upper bound $\frac{1}{2}$ is always attained.

We call a point set $S$ \emph{trivial} if $R(S)$ is disjoint from one of the sides of $U$. The following lemma shows that $\area(R(S))\geq \frac{1}{2}$ for trivial instances.

\begin{lemma}\label{lem:gap-trivial}
If $R(S)$ does not touch one of the sides of $U$, then $\area(R(S))\geq \frac{1}{2}$.
\end{lemma}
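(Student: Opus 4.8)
The plan is to first normalize the configuration and then reduce the area bound to an integral inequality over vertical slices, whose proof I expect to require a compensating (charging) argument rather than a slice-by-slice estimate. Since the dihedral group of the square $U$ acts on configurations (rotations and reflections carry empty anchored squares to empty anchored squares, hence $R(S)$ to $R(\sigma(S))$), I may assume without loss of generality that the side missed by $R(S)$ is the top side $\{(x,1):0\le x\le1\}$. The goal $\area(R(S))\ge\frac12$ is then equivalent to $\area(U\setminus R(S))\le\frac12$. The case to keep in mind is the extremal one, the bottom half $[0,1]\times[0,\frac12]$ realized by $S=\{(\frac12,0)\}$ (Fig.~\ref{fig:upperBound}(b)): the bound is tight, so the argument must exploit that pushing the reach away from the top forces the ``missing'' mass downward.

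For $t\in[0,1]$ set $M(t)=R(S)\cap(\{t\}\times[0,1])$, viewed as a subset of $[0,1]$, and $\ell(t)=\area_1(M(t))$. As $R(S)$ is a finite union of axis-aligned squares it is measurable, and Fubini gives $\area(R(S))=\int_0^1\ell(t)\,dt$, so it suffices to prove $\int_0^1\ell(t)\,dt\ge\frac12$. I would first record the structural consequences of the hypothesis. Because no empty anchored square reaches $y=1$, every column satisfies $\sup M(t)<1$. Moreover, each maximal square $q_i^j$ crossing the vertical line $x=t$ contributes a subinterval to $M(t)$, and since these objects are \emph{squares}, a square of side $s$ both fills a length-$s$ interval in every column it meets and spans exactly $s$ columns horizontally; thus ``tall'' and ``wide'' coincide. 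The no-top condition caps the side of a square by its distance from the top at its location, which ties the value $\ell(t)$ to the side-lengths of the squares positioned near $t$.

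The naive hope that $\ell(t)\ge\frac12$ holds for every $t$ is false: a single interior anchor such as $(0.7,0.2)$ already produces a column (e.g.\ $x=0.95$) with $\ell(t)=0.3$, while neighbouring columns have $\ell=1$, so only the \emph{average} is $\ge\frac12$. Hence the crux is a global charging argument: I would show that any deficit column ($\ell(t)<\frac12$) lies in the horizontal shadow of a tall-and-wide maximal empty square that, by spanning many columns, creates a compensating surplus ($\ell>\frac12$) there, and that charging each unit of deficit to such surplus can be made injective. Equivalently, one constructs an area-preserving injection of $U\setminus R(S)$ into $R(S)$ that reflects the high unreached mass about $y=\frac12$ and slides the residual low-corner holes (regions having no anchor to their lower-left) into the wide mid-height reach that the no-top-touching condition guarantees. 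Making this charge injective while accounting exactly for those anchorless bottom-corner holes is the main obstacle, and it is precisely here that the hypothesis ``the top is untouched'' must be used quantitatively.
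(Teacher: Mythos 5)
Your write-up is a plan rather than a proof. After the (correct) normalization to the top side and the Fubini reduction to $\int_0^1\ell(t)\,dt\ge\frac12$, everything hinges on a ``global charging argument'' or ``area-preserving injection of $U\setminus R(S)$ into $R(S)$'' that you describe only qualitatively and whose central step you yourself flag as ``the main obstacle.'' Nothing in the proposal constructs that injection, proves the claimed compensation between deficit and surplus columns, or makes the charge injective, so the lemma is not established. Such a charging scheme does exist --- it is essentially what the paper builds over several pages (gap classification plus charge regions $R_C$) to handle the \emph{nontrivial} instances in Theorem~\ref{thm:area} --- but it is a substantial construction, not something that can be asserted.

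More importantly, you are missing the elementary observation that makes this particular case a two-line argument, which is exactly what the paper does. Let $s=(x,y)\in S$ have maximum $y$-coordinate, and consider the two maximal empty squares having $s$ as their lower-left and lower-right corner, respectively. Both have their interiors strictly above the horizontal line through $s$, and every other anchor has $y$-coordinate at most $y$, so neither square can be blocked by a point of $S$; each grows until it hits $\partial U$. By hypothesis neither reaches the top side, so the first must reach the right side of $U$ (side length $1-x$) and the second the left side (side length $x$). These two squares are interior-disjoint and contained in $R(S)$, whence $\area(R(S))\ge x^2+(1-x)^2\ge\frac12$. No slicing, averaging, or injection is needed; the hypothesis that the top is untouched is used exactly once, to force these two squares to span the full width of $U$.
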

\begin{proof}
Without loss of generality, $R(S)$ does not touch the top side of $U$. Let $s=(x,y)$ be a point in $S$ with maximum $y$-coordinate. Consider the maximal empty squares whose lower-left and lower-right corners are at $s$. Since these squares do not touch the top side of $U$, and $s$ has maximum $y$-coordinate, they touch the left and right side of $U$, respectively. Consequently, their combined area is $x^2+(1-x)^2\geq (\frac12)^2+(\frac12)^2=\frac12$. Hence $\area(R(S))\geq \frac{1}{2}$, as claimed.
\end{proof}

\subparagraph{Remark.}
We do not know of any nontrivial point set $S$ for which $\area(R(S))$ attains the lower bound $\frac12$.
Our best lower bound construction for nontrivial instances yields $\frac{4}{7}$; see Fig.~\ref{fig:upperBound}(c).

\subparagraph{Outline.}
In the remainder of Section~\ref{sec:lowerbound}, we consider nontrivial instances $S\subset U$.
A \emph{gap} is a connected component of $U\setminus R(S)$, i.e., of the complement of the reach.
Section~\ref{ssec:basics} presents basic properties of $R(S)$ and its gaps, Section~\ref{ssec:classify}
classifies the possible gaps into five types, and Section~\ref{ssec:charging} presents a charging
scheme in which we define for every gap $C$ a region $R_C\subset R(S)$ such that $\area(C)\leq \area(R_C)$,
and the regions $R_C$ are pairwise interior-disjoint. Summation over all gaps yields
$\area(U\setminus R(S))\leq \sum_{C}\area(R_C)\leq\area(R(S))$, consequently $\area(R(S))\geq \frac12\area(U)=\frac12$.

\subsection{Properties of the Reach and its Gaps}
\label{ssec:basics}

\begin{lemma}	\label{lem:connected}
For every finite set $S\subset U$, the reach $R(S)$ is connected.
\end{lemma}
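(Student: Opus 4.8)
The plan is to reduce the statement to a combinatorial connectivity claim about the anchors. For each $i$, write $T_i=\bigcup_{j=1}^4 q_i^j$ for the \emph{star} of $s_i$, the union of its four maximal anchored squares; since all four squares contain the corner $s_i$, each $T_i$ is connected, and $R(S)=\bigcup_{i=1}^n T_i$. The key elementary observation is a \emph{contact rule}: if an anchor $s_k$ lies on one of the two outer edges of some $q_i^j$, then $s_k\in q_i^j\subseteq T_i$, while $s_k$ is also a corner of its own squares, so $s_k\in T_k$; hence $T_i\cap T_k\ni s_k$ and the two stars lie in the same connected component of $R(S)$. Define the \emph{contact graph} on $\{1,\dots,n\}$ by joining $i$ and $k$ whenever one such incidence occurs. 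Connectivity of this graph implies connectivity of $R(S)$, so it suffices to rule out a proper nonempty family of stars forming its own component.

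The engine is an extremal/sweep argument. Suppose, for contradiction, that some component $K$ has anchor set $I\subsetneq\{1,\dots,n\}$. First note $K$ is \emph{self-contained}: every outer edge of every square $q_i^j$ with $i\in I$ is blocked either by $\partial U$ or by an anchor lying on that edge, and any such blocking anchor lies in $q_i^j\subseteq K$, hence in $I$. Sort the anchors by $x$-coordinate and pick an $x$-consecutive pair $a\in I$, $b\notin I$ (such a pair exists because $I$ is a proper subset, after a tie-breaking perturbation of coincident coordinates). The open vertical strip between $a$ and $b$ contains no anchor. I would then track the maximal square of the lower anchor that opens toward the other anchor (its up-right or down-right square, according to their relative heights) and show that, when enlarged, it is blocked precisely by the neighboring anchor, placing that anchor on one of its edges; by the contact rule this forces $a$ and $b$ into the same component, contradicting $b\notin I$.

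The main obstacle is exactly this local two-anchor step: the connecting square may be truncated by $\partial U$, or by a third anchor sitting farther right, before it reaches the neighbor. I would dispatch this by always choosing the \emph{farther-reaching} of the two outward squares and by breaking ties among candidate anchors using the topmost (resp.\ rightmost) one, so that the first blocker of the growing square is the target neighbor itself whenever no third anchor intervenes. If a third anchor does truncate the square, that anchor lies on the square's edge and is strictly closer, yielding a contact edge and reducing the number of anchors in the relevant strip-free region; an induction on this count then closes the argument. Degenerate configurations with shared coordinates are handled by the same tie-breaking, or by a limiting perturbation. Establishing this local connection cleanly, uniformly across the boundary and third-anchor cases, is the crux of the proof.
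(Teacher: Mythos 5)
There is a genuine gap, and it sits exactly where you placed the crux. Your reduction to connectivity of the contact graph is a reduction to a statement that is false: two stars can intersect (so $R(S)$ is connected) without any anchor lying on an edge of another anchor's maximal square, because maximal squares can be blocked by $\partial U$ rather than by anchors. Concretely, take $S=\{s_1,s_2\}$ with $s_1=(0.5,\,0.01)$ and $s_2=(0.51,\,0.99)$. The only non-tiny squares in the star of $s_1$ are $[0.5,1]\times[0.01,0.51]$ and $[0,0.5]\times[0.01,0.51]$, both stopped by the sides of $U$ and neither having $s_2$ on its boundary; symmetrically, the largest square of $s_2$ is $[0,0.51]\times[0.48,0.99]$, stopped by the left side of $U$ and not touching $s_1$. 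Here $R(S)$ is connected (the horizontal band of $s_1$ overlaps the band of $s_2$), yet your contact graph on two vertices has no edge. The same phenomenon defeats the local step for an $x$-consecutive pair: both outward squares of the lower anchor can be truncated by $\partial U$ at a tiny size long before reaching the neighbor, and the neighbor's squares opening back toward it can fail as well, so no tie-breaking or induction on intervening anchors recovers a contact edge. Your argument detects only anchor-on-edge incidences, not geometric overlap of squares, and overlap is sometimes the only thing keeping $R(S)$ connected.

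The missing idea is a choice of direction that makes boundary-blocking impossible until a common target is reached. The paper's proof sends every anchor toward the center $c_0=(\frac12,\frac12)$: for $s$ with $x(s)\leq y(s)\leq\frac12$, consider the maximal empty square with lower-left corner $s$. If it does not contain $c_0$, its side length is less than $\frac12-x(s)$, so it cannot touch the top or right side of $U$ and must be blocked by an anchor $s'$ on its top or right edge, and that $s'$ satisfies $\|s'-c_0\|_\infty<\|s-c_0\|_\infty$. Finiteness of $S$ turns this into a terminating descent, so every star is joined to $c_0$ by a polyline inside $R(S)$. If you want to salvage your framework, you would have to add overlap edges to the contact graph (two squares crossing without either anchor on the other's boundary), at which point the sweep over $x$-consecutive pairs no longer gives you the local connection you need; the centerward descent is the cleaner route.
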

\begin{proof}
Let $S\subset U$ be a finite set, and let $c_0=(\frac12,\frac12)$ denote the center of $U$.
We show that for each $s\in S$, there is an empty square $Q_s$ anchored at $s$ that contains $c_0$
\textbf{or} whose boundary contains an anchor $s'\in S$ such that $\|s'-c_0\|_\infty <\|s-c_0\|_\infty$ (i.e., $s'$ is closer to $c_0$ in $L_\infty$ norm than $s$). This implies that $Q_s$ (hence $R(S)$) contains a line segment from $s$ to $c_0$ or to $s'$. Consequently, $R(S)$ contains a polyline from every $s\in S$ to $c_0$. By the definition of $R(S)$, this further implies that $R(S)$ contains a polyline between any two points in $R(S)$.

It remains to prove the claim. Let $s\in S$. We may assume without loss of generality that $x(s)\leq y(s)\leq \frac12$, hence $\|s-c_0\|_\infty=\frac12-x(s)$. Let $Q_s$ be the maximal empty square whose lower-left corner is $s$. Refer to Fig.~\ref{fig:gap-proof}(a). If $c_0\in Q_s$, then our proof is complete.
Otherwise, the side length of $Q_s$ is $a_s<\frac12-x(s)$, and there is a point $s'$ in the right or the top side of $Q_s$. The anchor $s'$ lies in the interior of the $L_\infty$-ball of radius $\frac12-x(s)$ centered at $c_0$, hence $\|s'-c_0\|_\infty <\|s-c_0\|_\infty$, as claimed.
\end{proof}

\begin{lemma}	\label{lem:gap-seg}
For every point $p\in U\setminus R(S)$, there exists a point $r\in \partial U$
such that the line segment $pr$ is horizontal or vertical; and $pr\subset U\setminus R(S)$.
\end{lemma}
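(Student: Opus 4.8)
The plan is to argue essentially by contradiction, reducing the lemma to the claim that a point $p\in U\setminus R(S)$ cannot have all four axis-parallel rays blocked by $R(S)$ strictly inside $U$. First I would dispose of the boundary case ($p\in\partial U$, where the degenerate segment $r=p$ already works) and assume $p$ is interior. For each of the four directions (up, down, left, right) take the maximal relatively open axis-parallel segment from $p$ contained in $U\setminus R(S)$. If even one of these reaches $\partial U$ with its closed endpoint still outside $R(S)$, the lemma holds, so assume each of the four segments terminates at an interior point of $R(S)$. Since $R(S)=\bigcup_{i,j}q_i^j$ is a finite union of closed squares, the complement is relatively open, and the first point at which a ray meets $R(S)$ lies on the $p$-facing edge of one of the maximal anchored empty squares $q_i^j$ (e.g.\ the upward ray meets the bottom edge of its blocker). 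This records a blocking square and its anchor in each direction, the anchor being one of the four corners of that square.

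The engine of the proof is a monotone descent toward $p$, in the same spirit as the proof of Lemma~\ref{lem:connected} but aimed at $p$ rather than at the center $c_0$. Starting from the anchor $s$ of, say, the upward blocker, I would take the maximal empty square anchored at $s$ in the orientation pointing toward $p$ (with $p$ in the appropriate quadrant of $s$). Either that square already contains $p$, or it is blocked by a new anchor $s'$ on its $p$-facing edge, and a short computation gives $\|s'-p\|_\infty<\|s-p\|_\infty$. Iterating and using finiteness of $S$, the descent must halt. One subtlety that looks paradoxical must be handled carefully here: because an anchored square grows by equal amounts in both coordinates, being close to $p$ in one coordinate does not prevent the square from first reaching a far side of $U$. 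Consequently the descent can terminate by having its toward-$p$ square run into $\partial U$ before enclosing $p$, and since $p\notin R(S)$ forbids it from ever covering $p$, this is exactly how it \emph{must} terminate.

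The main obstacle is the last step: converting this terminal, boundary-touching square into the free corridor we want. In the terminal configuration $p$ sits just off one side of an empty anchored square whose opposite side lies on $\partial U$, and I would argue that this exposes an axis-parallel segment from $p$ to that side of $U$ lying entirely in $U\setminus R(S)$, contradicting the assumption that the corresponding direction was blocked. Making this rigorous requires a case analysis over which corner of its square each anchor occupies and over the four orientations of the descent; the dihedral symmetry of $U$ collapses these into a few essential cases, but verifying that \emph{every} terminal configuration genuinely yields a free segment, rather than merely a square touching the boundary, is where the real work lies. Finally I would dispatch the minor technical point that the closed segment must avoid $R(S)$ even at its endpoint on $\partial U$: if a free ray meets $R(S)$ exactly at the boundary, that direction is treated as blocked, and the contradiction argument then shows this cannot occur simultaneously in all four directions.
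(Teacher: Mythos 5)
There is a genuine gap, and it sits exactly where you flag it: the conversion of the terminal configuration of your descent into a free axis-parallel corridor. Your descent toward $p$ can only end in two ways: either some anchored square contains $p$ (impossible, since $p\notin R(S)$), or the current maximal anchored square runs into $\partial U$ before reaching $p$. But that terminal square is itself an empty \emph{anchored} square, hence a subset of $R(S)$; it does not contain $p$, and it gives you no control whatsoever over the strip between its $p$-facing edge and $p$. Other anchors can populate that strip, and their squares can block every axis-parallel ray from $p$ short of $\partial U$, so no contradiction with ``all four directions are blocked'' follows. In short, the descent only re-derives the hypothesis $p\notin R(S)$ in a roundabout way and then stalls; I do not see how to finish along these lines without importing a different idea. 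A secondary (fixable) issue is that the strict decrease $\|s'-p\|_\infty<\|s-p\|_\infty$ need not hold for an arbitrary target $p$ (only $\leq$ in general; the analogous step in Lemma~\ref{lem:connected} exploits the special position of $c_0$ and the normalization $x(s)\leq y(s)\leq\frac12$), so termination of the descent would need a more careful potential.

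The missing idea, which is the paper's entire proof, is to center the square at $p$ rather than chase anchors toward it. Let $Q$ be the maximal empty square centered at $p$. If any anchor $s$ lies on $\partial Q$, then the maximal empty square anchored at $s$ with $p$ in the opposite quadrant already contains $p$ (it fits inside $Q$), contradicting $p\notin R(S)$; so $Q$ is stopped by $\partial U$, say by the bottom side. The perpendicular $pr$ from $p$ down to that side has length half the side of $Q$, and it is \emph{protected}: any anchored square $Q'$ meeting $pr$ has its anchor outside $Q$, hence side length at least $|pr|$, and a short computation shows such a $Q'$ must then contain $p$ itself --- again a contradiction. The centered square is what simultaneously rules out nearby anchors and guarantees that any intruding square is large enough to swallow $p$; your construction never produces an empty square \emph{centered} at (or even containing) $p$, which is why the corridor cannot be certified.
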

\begin{proof}
Let $p\in U\setminus R(S)$, and let $Q$ be the maximal empty axis-aligned square centered at $p$. Refer to Fig.~\ref{fig:gap-proof}(b). The boundary of this square, $\partial Q$, intersects $S$ or $\partial U$, otherwise $Q$ would not be maximal.

First assume that $\partial Q$ contains a point $s\in S$. Without loss of generality, we may assume that $x(p)\leq x(s)$ and $y(p)\leq y(s)$. Since $Q$ is empty, the maximal anchored square with upper-right corner at $s$ contains $p$, hence $p\in R(S)$, contradicting our assumption that $p\notin R(S)$.

\begin{figure}[h]
	\centering
	\graphicspath{{./fig/}}
\includegraphics[width=.9\textwidth]{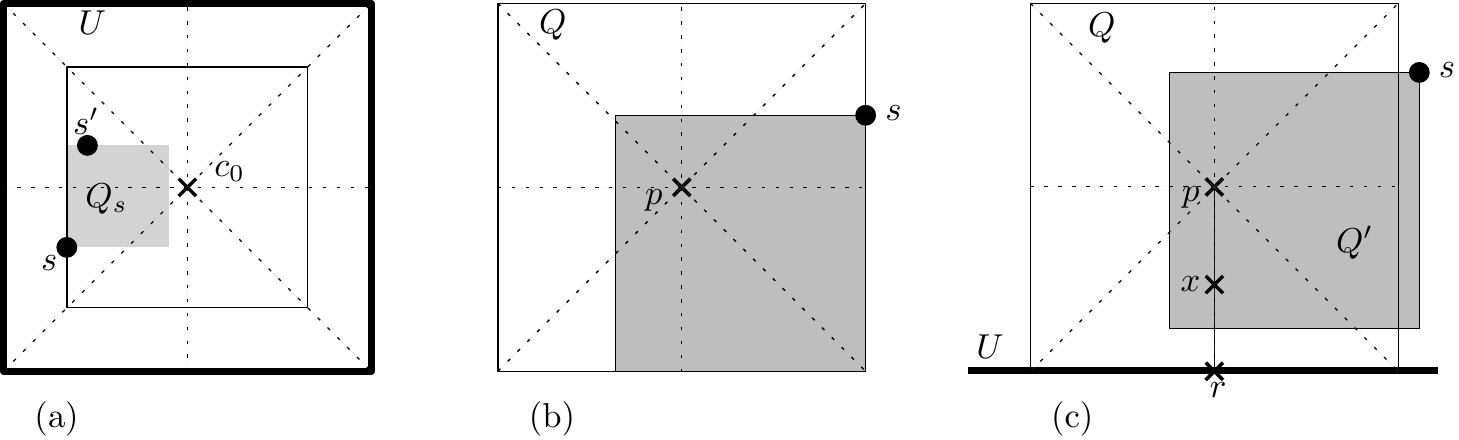}
	\caption{(a) A point $s\in S$ where the anchored square $Q_s$ does not contain $c_0$.
             (b) If $s\in \partial Q$, $x(p)\leq x(s)$, and $y(p)\leq y(s)$,
                then the lower-left square anchored at $s$ contains $p$.
             (c) $\partial Q$ intersects the bottom side of $U$, and $r$ lies below $p$.}
	\label{fig:gap-proof}
\end{figure}

We can now assume that $\partial Q$ intersects $\partial U$. Without loss of generality, $\partial Q\cap \partial U$ lies in the bottom side of both $Q$ and $U$. Let $r\in \partial Q \cap \partial U$ be a point vertically below $p$ (see Fig.~\ref{fig:gap-proof}(c) for an example). Suppose that segment $pr$ intersects $R(S)$. Then some point $x\in pr$ lies in a square $Q'$ anchored at a point $s\in S$. Since $Q$ is empty, the anchor $s$ lies outside of $Q$, and so the side length of $Q'$ is at least half of that of $Q$, i.e., the side length of $Q'$ is at least $|pr|$. However, then $y(s)\geq |pr|$, and the square $Q'$ contains the segment $px$, contradicting our assumption that $p\notin R(S)$. Therefore there is no such point $x\in pr$, and
$pr\subset U\setminus R(S)$, as claimed.
\end{proof}

\begin{corollary}\label{cor:connected1}
The reach is simply connected.
\end{corollary}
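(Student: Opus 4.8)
The plan is to deduce simple connectedness of $R(S)$ from the two preceding lemmas. Since Lemma~\ref{lem:connected} already gives that $R(S)$ is connected, it remains to show that $R(S)$ has no ``holes,'' i.e., that its complement $U \setminus R(S)$ has no bounded connected component. The key tool is Lemma~\ref{lem:gap-seg}, which says that every point $p \in U \setminus R(S)$ can be joined to the boundary $\partial U$ by an axis-parallel segment lying entirely in the complement. My first step would be to observe that this segment witnesses that the connected component of $U \setminus R(S)$ containing $p$ also touches $\partial U$.

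From there, the argument would run as follows. Let $C$ be any connected component (gap) of $U \setminus R(S)$, and pick any $p \in C$. By Lemma~\ref{lem:gap-seg}, there is a horizontal or vertical segment $pr \subset U \setminus R(S)$ with $r \in \partial U$. Since $pr$ is connected and lies in the complement, the whole segment lies in the single component $C$; in particular $r \in C$ and $r \in \partial U$. Thus \emph{every} gap reaches the boundary of the unit square $U$. Now I would invoke the topological fact that, for a set $R(S)$ inside the bounded square $U$, a ``hole'' of $R(S)$ (a bounded complementary component not touching $\partial U$) is exactly what obstructs simple connectedness. Because every complementary gap is shown to touch $\partial U$, none of them is an enclosed hole, so $R(S)$ has no holes.

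To make this rigorous rather than merely intuitive, I would argue through the complement in the plane. Consider the unbounded exterior of $U$ together with all gaps that touch $\partial U$; since each such gap meets $\partial U$ and the exterior of $U$ is connected and also meets $\partial U$ everywhere, the union $(\mathbb{R}^2 \setminus U) \cup (U \setminus R(S)) = \mathbb{R}^2 \setminus R(S)$ is connected. A compact (or at least closed and bounded) connected planar set whose complement in $\mathbb{R}^2$ is connected is simply connected by the standard planar characterization (equivalently, by Alexander duality, $\tilde H_0(\mathbb{R}^2 \setminus R(S)) = 0$ forces $H_1(R(S)) = 0$). Combined with Lemma~\ref{lem:connected}, this yields that $R(S)$ is simply connected.

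The main obstacle I anticipate is purely one of rigor, not of ideas: I must be careful about the topology of $R(S)$ as a point set—whether it is closed, and whether the components I call ``gaps'' behave well with respect to the standard planar simple-connectedness criterion. In particular, the clean statement ``complement connected $\Rightarrow$ simply connected'' requires $R(S)$ to be a reasonable set (e.g. closed and bounded), which holds here since $R(S)$ is a finite union of closed squares contained in the compact square $U$. The only genuine content is Lemma~\ref{lem:gap-seg}'s guarantee that gaps always escape to $\partial U$; once that is in hand, the rest is a short topological deduction, so I expect the proof to be only a few lines.
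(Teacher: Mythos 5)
Your proof is correct and follows essentially the same route as the paper: both deduce connectivity from Lemma~\ref{lem:connected} and then use Lemma~\ref{lem:gap-seg} to show that every complementary component of $R(S)$ in $U$ reaches $\partial U$, so no gap can be an enclosed hole. The paper phrases this as a direct contradiction (a hole would be a gap $C$ with $\partial C\subset R(S)$, yet the segment $pr$ forces a boundary point $r\in\partial C\setminus R(S)$), whereas you route the same observation through connectedness of $\mathbb{R}^2\setminus R(S)$ and the standard planar criterion; the underlying idea is identical.
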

\begin{proof}
By Lemma~\ref{lem:connected}, $R(S)$ is connected. Suppose that $R(S)$ is not simply connected. Then there is a gap $C\subset U\setminus R(S)$ such that $\partial C\subset R(S)$. Let $p\in {\rm int}(C)$ be an arbitrary point in the interior of $C$. By Lemma~\ref{lem:gap-seg}, there is a point $r\in \partial U$ such that $pr\subset U\setminus R(S)$, which implies $r\in \partial C$, contradicting our assumption $\partial C\subset R(S)$. Therefore $R(S)$ is simply connected, as required.
\end{proof}

\subsection{Classification of Gaps}
\label{ssec:classify}

In this section we classify the possible shapes of the gaps in $U\setminus R(S)$ for nontrivial instances.
To simplify our analysis, we assume that $S\subset {\rm int}(U)$ and no two points in $S$ have the same $x$- or $y$-coordinates.
This assumption is justified by the following lemma.
\begin{lemma}\label{lem:genpos}
If $\area(R(S))\geq \frac{1}{2}$ for every finite point set $S\subset U$ such that $S\subset {\rm int}(U)$ and no two points in $S$ have the same $x$- or $y$-coordinates, then $\area(R(S))\geq \frac{1}{2}$ for every finite point set $S\subset U$.
\end{lemma}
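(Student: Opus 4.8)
The plan is to prove the lemma by a perturbation-and-limit argument: starting from an arbitrary finite $S\subset U$, I would approximate it by sets $S_\eps$ satisfying the general-position hypotheses and show that the reach cannot collapse abruptly in the limit. Fix $S=\{s_1,\dots,s_n\}$ and, for each small $\eps>0$, choose a perturbation $S_\eps=\{s_1',\dots,s_n'\}$ with $\|s_i'-s_i\|_\infty\le\eps$, selected generically so that $S_\eps\subset\mathrm{int}(U)$ and no two points of $S_\eps$ share an $x$- or $y$-coordinate (possible since $S$ is finite). By hypothesis $\area(R(S_\eps))\ge\tfrac12$, so it suffices to show that $R(S_\eps)$ is contained in the $(3\eps)$-neighborhood of $R(S)$ in the $L_\infty$ metric, written $N_{3\eps}(R(S))$. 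Then $\area(R(S_\eps))\le\area\bigl(N_{3\eps}(R(S))\bigr)$, and letting $\eps\to0$ with $S$ fixed, the right-hand side decreases to $\area(R(S))$ because $R(S)$ is a finite union of squares and hence has a boundary of measure zero. This yields $\area(R(S))\ge\tfrac12$.

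The heart of the argument, and the step I expect to be the main obstacle, is a one-sided stability (upper semicontinuity) estimate for the maximal anchored squares. Writing $a_i^j(\cdot)$ for the side length of the maximal empty square of type $j\in\{1,2,3,4\}$ anchored at the $i$th point, I would prove that, for all sufficiently small $\eps$ (depending on $S$),
\[
a_i^j(S_\eps)\le a_i^j(S)+2\eps\qquad\text{for all }i,j.
\]
By symmetry among the four corner types it is enough to treat $j=1$ (lower-left), where $a_i^1(S)=\min\bigl\{1-x_i,\,1-y_i,\,\min_k \max(x_k-x_i,\,y_k-y_i)\bigr\}$, the inner minimum running over the \emph{blockers} $k$ with $x_k>x_i$ and $y_k>y_i$; a point on the line $x=x_i$ or $y=y_i$ lies on an edge of the growing square and does not block it. The key observation is that a \emph{strict} blocker cannot disappear under a small perturbation: if $x_k>x_i$ and $y_k>y_i$ with a positive gap, then $x_k'>x_i'$ and $y_k'>y_i'$ once $\eps$ is below that gap, so $k$ remains a blocker for $S_\eps$ and its value moves by at most $2\eps$. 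Perturbation can only \emph{create} new blockers (from points that sat on an edge), which only decreases the minimum, and the two sides-of-$U$ terms move by at most $\eps$. Taking the constraint that realizes $a_i^1(S)$ and tracking it through the perturbation therefore gives $a_i^1(S_\eps)\le a_i^1(S)+2\eps$. The point of the estimate is that the side length can jump only \emph{downward} as one moves away from a degenerate configuration, never upward — which is exactly the direction needed.

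Finally I would assemble the containment. Each maximal square $q_i^j(S_\eps)$ has a corner within $\eps$ of the corresponding corner $s_i$ of $q_i^j(S)$ and, by the estimate above, side length exceeding that of $q_i^j(S)$ by at most $2\eps$; an elementary comparison then shows $q_i^j(S_\eps)\subseteq N_{3\eps}\bigl(q_i^j(S)\bigr)\subseteq N_{3\eps}(R(S))$. Taking the union over $i$ and $j$ gives $R(S_\eps)\subseteq N_{3\eps}(R(S))$, completing the chain above. Two routine points remain for the writeup: the degenerate case $a_i^j(S_\eps)\le 2\eps$, where the square has diameter $O(\eps)$ and lies within $O(\eps)$ of its own anchor $s_i\in S\subseteq R(S)$, so the containment is immediate; and the continuity-of-measure step, which follows since all sets involved lie in the fixed bounded region $U$ and $\partial R(S)$ has measure zero.
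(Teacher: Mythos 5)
Your proof is correct and follows essentially the same route as the paper's: perturb $S$ into general position, invoke the hypothesis on $S_\eps$, and use the fact that the side length of each maximal anchored square can only increase by $O(\eps)$ under the perturbation to conclude $\area(R(S_\eps))\leq\area(R(S))+O(\eps)$. The paper merely asserts this one-sided stability in a single sentence, whereas you supply the justification (the explicit $\min$--$\max$ formula for the side length and the persistence of strict blockers) and package the area comparison via the neighborhood containment $R(S_\eps)\subseteq N_{3\eps}(R(S))$; both are fine.
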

\begin{proof}
Let $S\subset U$ be a finite point set that contains a point in $\partial U$ or two points with the same $x$- or $y$-coordinate.
Let $\eps_0$ be minimum positive difference between $x$- and $y$-coordinates of points in $S$. For every $\eps\in (0,\eps_0/2)$, translate each point in $S$ by a random vector of length at most $\eps$ into ${\rm int}(U)$. The resulting point set $S_\varepsilon$ lies in ${\rm int}(U)$ and have distinct $x$- and $y$-coordinates with probability 1; the side length of each maximal anchored square may increase by at most $2\varepsilon$, but could decrease substantially. Consequently, ${\rm area}(R(S_\varepsilon))\leq {\rm area}(R(S))+4n\varepsilon$, hence $\lim_{\varepsilon\rightarrow 0}{\rm area}(R(S_\varepsilon))\leq {\rm area}(R(S))$.
\end{proof}

We distinguish a \emph{corner gap}, which is incident to a corner of $U$;
and a \emph{side gap}, which is adjacent to exactly one side of $U$.
We show that every gap is bounded by $\partial U$ and by squares anchored at up to three points in $S$.
We define five types of gaps (two types of corner gaps and three types of side gaps). Each type is defined together with an empty rectangle $B\subset U$ and 1--3 anchors on the boundary of $B$. In each case, the gap is determined by the maximal empty squares that lie entirely in $B$ and are anchored at points in $S\cap B$.

We describe each type modulo the symmetry group of $U$ (i.e., the dihedral group $D_4$). Specifically, we restrict ourselves to corner gaps incident to the lower-left corner of $U$, and side gaps adjacent to the bottom side of $U$. Reflection in the line $x=y$ (resp., $x=\frac12$) maintains corner gaps incident to the origin (resp., side gaps along the bottom side of $U$); and we describe only one variant modulo reflection.

\begin{figure}[h]
	\centering
%	\def\svgwidth{.2\textwidth}
%	\graphicspath{{./fig/}}
\includegraphics[width=.95\textwidth]{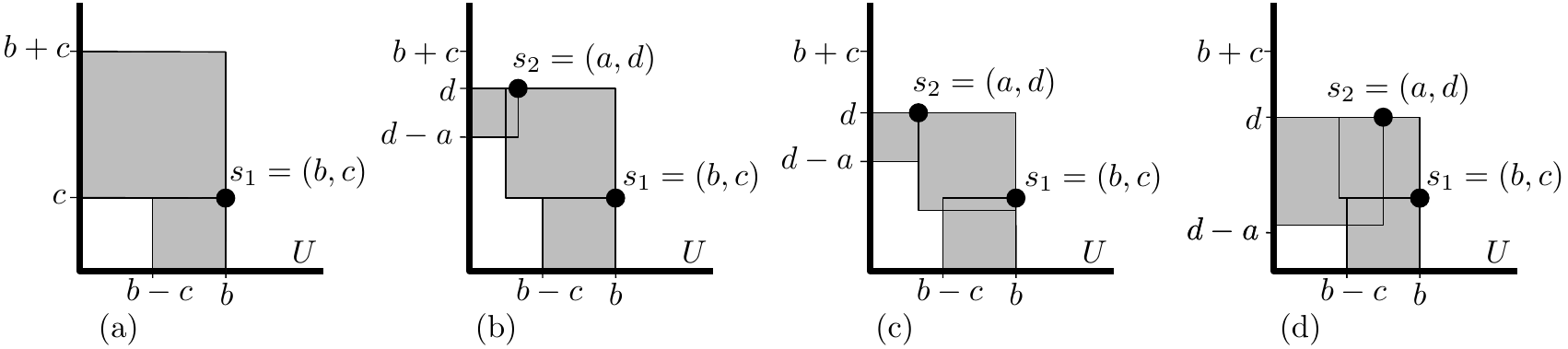}
	\caption{(a) A corner gap of type~1.
             (b--d) Corner gaps of type~2.}
	\label{fig:corner-gaps}
\end{figure}

\begin{enumerate}
\item Let $0<c<b<1$. If $B=[0,b]\times [0,b+c]$ is empty and $s_1=(b,c)\in S$, then
the squares anchored at $s_1$ form a corner gap $[0,b-c]\times [0,c]$. See Fig.~\ref{fig:corner-gaps}(a).
\item Let $0<a<b<1$ and $0<c<d<1$ such that $c<b$ and $d<b+c$. If $B=[0,b]\times [0,d]$ is empty and $s_1=(b,c), s_2=(a,d)\in S$, then the squares anchored at $s_1$ and $s_2$ form a corner gap
    $[0,b-c]\times [0,\min(c,d-a,d-b+a)]\cup [0,\min(a,b-c,b-d+c)]\times [0,d-a]$.
    See Fig.~\ref{fig:corner-gaps}(b--d).
\item Let $0<a<b<1$ and $0<c,d<1$ with $\max(c,d)<b-a$. If $B=[a,b]\times [0,\min(c,d)+(b-a)]$ is empty and $s_1=(a,c), s_2=(b,d)\in S$, then the squares anchored at $s_1$ and $s_2$ form a side gap
    $[a+c,b-d]\times [0,\min(c,d)]$. See Fig.~\ref{fig:side-gaps}(a).
\item Let $0<a<b<1$ and $0<c<d<1$ with $b-a<d$. If $B=[a,b+d]\times [0,d]$ is empty and $s_1=(a,c), s_2=(b,d)\in S$, then the squares anchored at $s_1$ and $s_2$ form a side gap
    $[a+c,b]\times [0,\min(c,d-b+a)]$. See Fig.~\ref{fig:side-gaps}(b).
\item Let $0<a<b<a'<1$ and $0<c<c'<d<1$ with $b-a<d$ and $a'-b<d$. If $B=[a,a']\times [0,d]$ is empty and $s_1=(a,c), s_2=(b,d), s_3=(a',c') \in S$, then the squares anchored at $s_1$, $s_2$, and $s_3$
    form a side gap
    $[a+c,\min(b,a'-d+c')]\times [0,\min(c,d-b+a)]\cup [\min(b,a'-d+c'),a'-c']\times [0,\min(c',d-a'-b)]$.
    See Fig.~\ref{fig:side-gaps}(c) for an example.
\end{enumerate}

\begin{figure}[h]
	\centering
%	\def\svgwidth{.2\textwidth}
%	\graphicspath{{./fig/}}
\includegraphics[width=.95\textwidth]{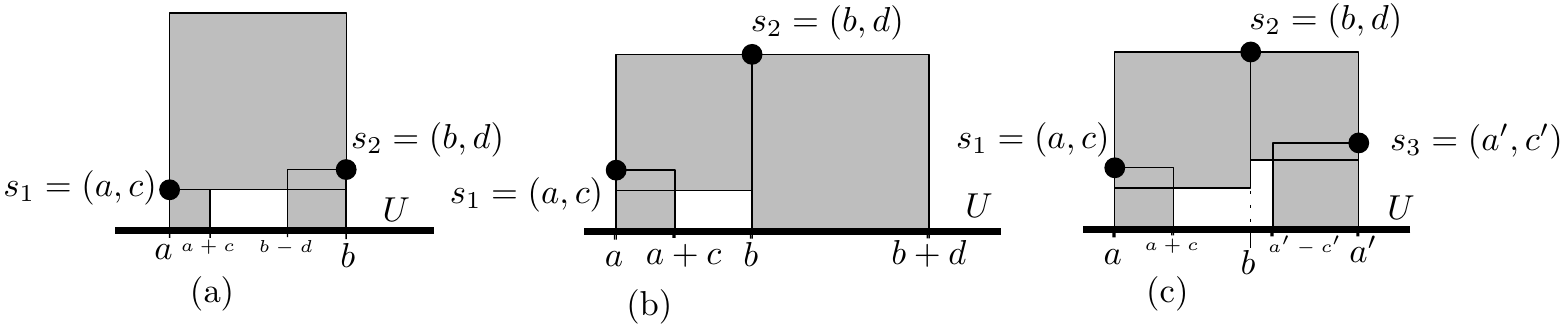}
	\caption{(a--c) Side gaps of type~3, 4, and 5, respectively.}
	\label{fig:side-gaps}
\end{figure}

\begin{lemma}\label{lem:classify-boundary}
Every gap $C$ of type~1--5 is disjoint from all empty squares that are anchored at points in $S$ and lie in the exterior of the defining box $B$ of $C$. Consequently, $C$ is bounded by $\partial U$ and some empty squares anchored at points in $S\cap B$.
\end{lemma}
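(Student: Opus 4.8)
The plan is to prove the two assertions in turn, with the first (disjointness from exterior-anchored squares) carrying essentially all of the content and the second (the boundary structure) following formally. I would fix a gap $C$ of one of the five types, with defining box $B$ and anchors as listed, and use the $D_4$-reduction already set up so that $C$ is a lower-left corner gap (types~1--2) or a bottom side gap (types~3--5). In both cases $C$ lies inside $B$ and abuts $\partial U$, so the sides of $B$ that are interior to $U$ are exactly those across which an exterior anchor could shoot a square: the top and right sides of $B$ for corner gaps, and the top, left, and right sides for side gaps.

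For the first assertion I would fix an empty axis-aligned square $q\subseteq U$ anchored at some $s\in S\setminus B$ and assume, for contradiction, that $q$ meets $C$. Since $C\subseteq B$ while $s\notin B$, the square $q$ must cross one of the interior sides of $B$ named above, and I would run a finite case analysis indexed by (i) which exterior region of $B$ contains $s$, and (ii) which corner of $q$ is the anchor $s$ (equivalently, in which diagonal direction $q$ extends). In each case I plan to reach one of three contradictions: (M3) $q$ extends away from $C$ and cannot meet it; (M2) a square reaching $C$ from that side would have side length so large that it protrudes through the side of $U$ on which $C$ abuts, contradicting $q\subseteq U$; or (M1) a square reaching $C$ while staying inside $U$ must contain one of the defining anchors $s_i$ strictly in its interior, contradicting that $q$ is empty. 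The inequalities defining each type (the constraints on $a,b,c,d$ and the $\min$ expressions in the gap formulas) are precisely what force the blocking anchor into the interior of any penetrating square; concretely, for type~1 a square entering from the right that reaches $x\le b-c$ has width exceeding $c$ and therefore encloses $s_1=(b,c)$, whereas one descending from above that reaches $y\le c$ has side length exceeding $b$ and therefore exits the left side of $U$.

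For the ``consequently'' clause I would argue from the definition of $C$ rather than from general topology. Each gap $C$ is defined as the part of $B$ left uncovered by the maximal squares anchored at the boundary anchors $s_i\in S\cap B$, so those defining squares do not meet $\mathrm{int}(C)$; by the first assertion, neither does any square anchored at a point of $S\setminus B$. Since $\mathrm{int}(B)$ contains no anchor, every anchored square that could reach into $B$ is either one of these two kinds, so $\mathrm{int}(C)\subseteq U\setminus R(S)$ and $C$ is indeed a gap (consistent with $R(S)$ being simply connected, Corollary~\ref{cor:connected1}). Reading the defining formulas then shows that $\partial C$ consists of arcs of $\partial U$ together with edges of the maximal squares anchored at the points $s_i\in S\cap B$, which are the asserted bounding squares.

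The main obstacle will be the bookkeeping in the case analysis for the two- and three-anchor types, especially type~5, where $C$ splits into two rectangular pieces governed by competing $\min$ terms. There I expect to branch on which term attains each $\min$ (that is, which anchor actually blocks each portion of $C$) and to verify, for every exterior region of $s$ and every entry direction of $q$, that the corresponding anchor among $s_1,s_2,s_3$ lands in $\mathrm{int}(q)$ or that $q$ leaves $U$. The selection between mechanisms (M1) and (M2) is governed by a clean dichotomy: lateral entries of $q$ (from a point beside $B$) are stopped by an anchor, while vertical entries (from a point above $B$) are stopped by $\partial U$. Once this dichotomy is laid out, each individual sub-case reduces to a short inequality check of the kind carried out above for type~1.
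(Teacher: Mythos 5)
Your overall strategy---fix an exterior-anchored empty square $q$ meeting $C$, branch on the exterior region containing the anchor and on which corner of $q$ is the anchor, and derive a contradiction in each branch---is sound and would prove the lemma, but it is a genuinely different and much heavier route than the paper's. The paper avoids the case explosion with a single shielding argument: it first observes that in all five types, $\partial B\cap {\rm int}(U)$ is covered by empty squares $Q_j$ anchored at the defining points $s_j\in S\cap\partial B$, each blocked by another point $p_j$ of $S\cap\partial B$ or of $\partial U\cap\partial B$. Any square $Q_i$ anchored outside $B$ that penetrates ${\rm int}(B)$ must cross $\partial B\cap{\rm int}(U)$ and hence meet some $Q_j$; since ${\rm int}(Q_i)$ contains neither $s_j$ nor $p_j$ (emptiness), one gets $Q_i\cap B\subset Q_j\subset R(S)$, so $Q_i$ misses the gap. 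This replaces your per-type, per-direction, per-corner inequality checks (which for type~5 also branch on which term attains each $\min$) with one uniform geometric step; the only per-type work is verifying the covering of $\partial B\cap{\rm int}(U)$, which is read off directly from the definitions. Your approach buys nothing in generality here and costs a large, error-prone enumeration.

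There is also a concrete error in the organizing principle you rely on to tame that enumeration. The ``clean dichotomy'' (lateral entries stopped by an anchor, vertical entries stopped by $\partial U$) is false, already for type~1. Take $b=0.2$, $c=0.1$, so $B=[0,0.2]\times[0,0.3]$ and the gap is $[0,0.1]\times[0,0.1]$, and consider a square anchored at its top-left corner $s=(0.05,0.35)\notin B$ with side length $0.3$, i.e., $q=[0.05,0.35]\times[0.05,0.35]$. This $q$ enters from above, reaches the interior of the gap, and stays entirely inside $U$ (it does not exit the left side, contrary to your claim that a descending square of side length exceeding $b$ must); it is excluded only because $s_1=(0.2,0.1)$ lies in ${\rm int}(q)$, i.e., by your mechanism (M1), not (M2). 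The correct dichotomy depends on which corner of $q$ is the anchor, not merely on the entry direction, so if you execute the case analysis as outlined, several sub-cases will be closed by an invalid argument even though a valid one (the anchor lands in ${\rm int}(q)$) is available. Your handling of the ``consequently'' clause is fine and matches the paper's implicit reasoning.
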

\begin{proof}
In each of the five cases, $\partial B\cap {\rm int}(U)$ is covered by empty squares anchored at the points in $S$ that define $C$. More precisely, each point in $\partial B\cap {\rm int}(U)$ lies in an empty square anchored at a point in $S\cap \partial B$ blocked by some point in $S\cap \partial B$ or $\partial U\cap \partial B$.
For $s_i\in S$ lying in the exterior of $B$, let $Q_i$ be a square anchored at $s_i$. If $Q_i$ intersects $B$, then its interior intersects $\partial B\cap {\rm int}(U)$, hence it intersects a square $Q_j$ anchored at some $s_j\in \partial B$ and blocked by some point $p_j \in S\cap \partial B$ or $\partial U\cap \partial B$. Since ${\rm int}(Q_i)$ contains neither $s_j$ nor $p_j$, we have $Q_i\cap B\subset Q_j$, and so $Q_i$ is disjoint from the gap $C$, as claimed.
\end{proof}

We prove the following classification result.
\begin{lemma}\label{lem:classify}
Every gap of a nontrivial instance is of one of the five types defined above.
\end{lemma}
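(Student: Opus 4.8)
The plan is to first determine the coarse shape of a gap—corner versus single side—and then read off its exact combinatorial type from the staircase of square edges bounding it, reducing throughout by the $D_4$-symmetry already used in the text.

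\textbf{Step 1: corner/side dichotomy.} By Corollary~\ref{cor:connected1} the reach is simply connected, so every gap $C$ meets $\partial U$. I would next rule out that a single gap meets two opposite sides. Since the instance is nontrivial, $R(S)$ meets all four sides, and by Lemma~\ref{lem:connected} it is connected, so it contains a path $\pi$ joining the left and right sides of $U$. If a gap $C$ met both the top and the bottom sides, then (being connected and relatively open) it would contain a path $\sigma$ joining those two sides; but a left--right path and a top--bottom path inside a square must cross, forcing a point of $\pi\cap\sigma\subseteq R(S)\cap C$, which is impossible. The same argument applied to $\pi$ oriented top--bottom rules out meeting both the left and right sides. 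Hence $C$ meets no pair of opposite sides, so it is adjacent either to exactly one side (a side gap) or to two adjacent sides, i.e.\ to their common corner (a corner gap). I may then assume a side gap is adjacent to the bottom side and a corner gap is incident to the lower-left corner.

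\textbf{Step 2: reduction to a region under a staircase.} Fix a bottom side gap $C$. Every point of $C$ has, by Lemma~\ref{lem:gap-seg}, a horizontal or vertical segment to $\partial U$ lying in $U\setminus R(S)$, hence in $C$; its endpoint lies on a side that $C$ meets, which can only be the bottom side. Thus every such segment points downward, so $C$ is vertically convex toward the bottom: it is exactly the set of points strictly below a ceiling function $f$ over an interval $(x_1,x_2)$, with floor on $\partial U$, a vertical left wall at $x=x_1$, and a vertical right wall at $x=x_2$. Each wall is the right (resp.\ left) edge of a maximal empty square that reaches the bottom side, and so is determined by a single anchor placed at a top corner of its wall-square. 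For a corner gap the same reasoning gives segments pointing down or left, so $C$ is a lower-left monotone staircase region whose two non-boundary sides are edges of squares and whose other two sides lie on $\partial U$.

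\textbf{Step 3: constraining the ceiling and counting anchors.} The crux is to bound the ceiling's complexity. I would first show $f$ has no interior local minimum: at such a minimum the lowest covering square has its bottom edge exposed to the gap, so either its anchor sits at a bottom corner on the ceiling or it is blocked from below by an anchor on its bottom edge; either way an anchor $s'$ lies in the interior of the ceiling, and one of the two downward-opening squares anchored at $s'$ is empty and extends into $C$ all the way to the bottom side, contradicting $C\subseteq U\setminus R(S)$. The same invading-square principle shows that an anchor on the ceiling can occur only at a descending corner, which forces the ascending and descending portions to each be pinned by a single maximal empty square and caps the number of intermediate anchors. Consequently a side gap is bounded by at most three anchors (two walls plus one capping anchor), while a corner gap—where the corner supplies one wall and one flank for free—is bounded by at most two.

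\textbf{Step 4: matching the five types, and the main obstacle.} Finally I would enumerate the finitely many admissible configurations, distinguished by the orientations of the wall-squares (whether the right wall-square opens down-left or down-right, separating types~3 and~4), by whether the left and right flanks already meet (two anchors, types~1--4) or leave a notch that must be capped by a third anchor (type~5), and, for corner gaps, by whether one anchor already closes the corner (type~1) or two are needed (type~2). In each configuration Lemma~\ref{lem:classify-boundary} guarantees that squares anchored outside the defining box do not intrude, so the gap is exactly the region cut out by the listed squares, and computing its corners reproduces the stated formulas. I expect the main obstacle to be precisely this last bookkeeping: turning the ``no interior local minimum'' structure into a rigorous bound of three on the number of relevant anchors, verifying that the only realizable patterns are the five (up to $D_4$), and checking the side-conditions (such as $\max(c,d)<b-a$ versus $b-a<d$) that separate the types while excluding spurious configurations in which a fourth anchor or a second valley would be forced to lie in the interior of one of the empty squares.
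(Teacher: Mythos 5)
Your Steps 1 and 2 are sound and in fact supply more detail than the paper does for the corresponding claims: the crossing argument for why a gap cannot meet two opposite sides, and the use of Lemma~\ref{lem:gap-seg} to show that a bottom side gap is the hypograph of a ceiling function over an interval, are both correct. Your overall strategy, however, diverges from the paper's and leaves the decisive step unproved. The paper does not analyze the ceiling globally; it pins down the defining anchors constructively: for a side gap it first proves that at least one of the two squares meeting $C\cap\partial U$ at $p_1,p_2$ has its anchor at the \emph{diagonally opposite} top corner (the ``lead'' dichotomy, which your Step 2 skips but which is what separates the genuinely different configurations), then grows the maximal empty square $Q_0$ with lower-left corner at $(a,0)$, locates the single blocking anchor $s$ on its right or top side, and expands $Q_0$ into a maximal empty rectangle; whether $s$ lies on the right or the top side, and whether the expansion reaches far enough, reads off type 3, 4, or 5 directly. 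Corner gaps are handled the same way starting from the maximal empty square at the corner.

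The genuine gap in your argument is Step 3. ``No interior local minimum'' only makes the ceiling unimodal, and a unimodal staircase can still have arbitrarily many ascending and descending steps; the assertion that ``the ascending and descending portions [are] each pinned by a single maximal empty square,'' which is what caps the anchor count at three and yields exactly the five types (all of which have at most six boundary edges), is precisely the content of the lemma and is not argued. Your appeal to Lemma~\ref{lem:classify-boundary} in Step 4 cannot close this hole, since that lemma presupposes that the gap is already of type 1--5 with a known defining box. Nor can the issue be deferred to ``bookkeeping'': without a bound on the number of steps there is no finite list of configurations to enumerate, and you acknowledge this is the main obstacle rather than resolving it. A smaller local slip: at an interior local minimum the downward-opening square anchored at $s'$ need not ``extend all the way to the bottom side''; the correct (and sufficient) statement is that it has positive side length and therefore meets ${\rm int}(C)$, which already gives the contradiction.
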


\begin{proof}
	Let $C$ be a gap. Since both $C$ and $R(S)$ are connected (cf.\ Lemma~\ref{lem:connected}), $C\cap \partial U$ is also connected. Since $S$ is nontrivial, $C$ touches at most two sides of $\partial U$.
	We distinguish two cases, depending on whether $C$ is incident to a corner of $U$.
	
	\subparagraph{Corner gaps.}
	Assume that $C$ is incident to the lower-left corner of $U$ (i.e., the origin $\mathbf{0}$). We need to show that $C$ is of type~1 or 2. Let $[0,b]^2$ be the maximal empty square incident to $\mathbf{0}$. We may assume (by a reflection in $y=x$, if necessary) that there is an anchor $s_1=(b,c)$ on the right side of this square, for some $c\in (0,b)$. If $[0,b]\times [0,b+c]$ is empty, then $C$ is of type~1. Otherwise, let $s_2=(a,d)$ be the anchor with minimum $y$-coordinate in the interior of $[0,b]\times [0,b+c]$. Then $[0,b]\times [0,d]$ is empty, and $C$ is of type~2.
	
	\subparagraph{Side gaps.}
	Assume that $C$ is not incident to any corner of $U$, but it touches the bottom side of $U$.
	By Corollary~\ref{cor:connected1}, $C\cap \partial U$ is a horizontal line segment, that we denote by $p_1p_2$, where $x(p_1)<x(p_2)$. By the definition of $R(S)$, point $p_1$ is the lower-right corner of a square $q_1$ anchored at some point $s_1\in S$; and $p_2$ is the lower-left corner of a square $q_2$ anchored at some $s_2\in S$. Clearly, $s_1$ (resp., $s_2$) is a upper-left or upper right corner of $q_1$ (resp., $q_2$).
	
	We claim that $x(p_1)\neq x(s_1)$ or $x(p_2)\neq x(s_2)$. Suppose, to the contrary, that $x(p_1)= x(s_1)$ and $x(p_2)= x(s_2)$. Without loss of generality, assume that $y(s_1)\leq y(s_2)$. If the rectangle $R=[x(p_1),x(p_2)]\times [0,y(s_1)]$ is empty, then the lower-right square anchored at $s_1$ contains segment $p_1p_2$. Otherwise, let $s_3$ be the point in the interior of $R$ with minimum $y$-coordinate. Since $x(p_1)<x(s_3)<x(p_2)$, the lower-left and lower-right squares anchored at $s_3$ reach $p_1p_2$. In both cases, we see that some anchored squares touch $p_1p_2$, contradicting the assumption that this segment is in a gap. This completes the proof of the claim.
	
	We may assume (by a reflection in $x=\frac12$) that $x(p)\neq x(s_1)$. Then $s_1$ is the upper-left and $p_1$ is the lower-right corner of the anchored square $q_1$. Letting $s_1=(a,c)$, we have  $p_1=(a+c,0)$. Let $Q_0=[a,a+h]\times [0,h]$ be the maximum empty square with lower-left corner at $(a,0)$.
	Note that this square cannot touch the right side of $U$, otherwise it would contain $s_2$. Therefore,
	there is an anchor $s=(b,d)$ in the right or top side of $Q_0$. We distinguish two cases:
	
	\subparagraph{Case~1: $s$ is in the right side of $Q_0$.}
	Since the lower-left square anchored at $s$ does not contain $p_1=(a+c,0)$, we have $a+c<x(s)-y(s)$, hence $a+y(s)<x(s)-c$. Expand $Q_0$ upward into a maximal empty rectangle $R_1=[a,a+h]\times [0,k]$ for some $k\in (h,1]$. If $R_1$ contains $[a,b]\times [0,\min(a,b)+(b-a)]$, then $C$ is of type~3.
	Otherwise, it is of type~5.
	
	\subparagraph{Case~2: $s$ is in the top side of $Q_0$.}
	Expand $Q_0$ to the right into a maximal empty rectangle $R_2=[a,a+k]\times [0,h]$ the for some $k\in (h,1-a)$.
	If $R_2$ contains $[a,b+d]\times [0,d]$, then $C$ is of type~4. Otherwise, it is of type~5.
\end{proof}

It is now easy to check that the following properties hold for all five types of gaps.

\begin{corollary}\label{cor:classify-squares}
\begin{itemize}%\itemsep -1pt
\item[]
\item[{\rm (i)}] Each gap is either a rectangle incident to a side of $U$, or the union of two rectangles incident to the same side of $U$ (which we call an \emph{L-shaped} gap).
\item[{\rm (ii)}] Every edge $uv$ of a gap is contained in either $\partial U$ or a maximal anchored rectangle.
 Consequently, the square built on the side $uv$ outside of the gap lies either outside of $U$ or in $R(S)$.
\end{itemize}
\end{corollary}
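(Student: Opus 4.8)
The plan is to read both statements off directly from the five explicit gap descriptions produced in the classification (Lemma~\ref{lem:classify}), aided by Lemma~\ref{lem:classify-boundary}, which already guarantees that every gap $C$ is bounded only by $\partial U$ and by maximal empty squares anchored at the points of $S\cap B$ that define it. Thus neither part requires any new geometric idea; both are finite verifications on the coordinate formulas, which is exactly why the corollary is stated as ``easy to check.''

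For part~(i) I would simply scan the five types. Types~1, 3, and 4 each describe a single axis-aligned rectangle, incident to the bottom side of $U$ (types~3--4) or to the lower-left corner and hence to the bottom side (type~1). Types~2 and 5 describe a set written as a union of two axis-aligned rectangles; in each case I would check from the interval formulas that the two rectangles share a common side and both reach the same side of $U$ — the bottom side for the side gap of type~5, and the bottom (and left) side for the corner gap of type~2 — so their union is an L-shape incident to a single side, as claimed.

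For part~(ii) the first sentence is immediate from Lemma~\ref{lem:classify-boundary}: since $C$ is bounded by $\partial U$ and by maximal anchored squares, every edge $uv$ of the rectilinear region $C$ is a subsegment of a side of $U$ or of a side of one such anchored square $q$. For the ``consequently'' clause I would argue by two cases on where $uv$ lies. If $uv\subset\partial U$, then the side of $uv$ facing away from $C$ lies outside $U$, so the square erected on $uv$ on that side is outside $U$. If instead $uv$ lies on a side of an anchored square $q$, then, because $q\subseteq R(S)$ is disjoint from the gap, $q$ lies on the far side of $uv$ from $C$; since $uv$ is contained in one side of $q$, its length $|uv|$ is at most the side length $\ell_q$ of $q$, so the square erected on $uv$ toward the interior of $q$, having side $|uv|\le\ell_q$, is contained in $q$ and hence in $R(S)$.

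The only genuine obstacle is the bookkeeping underlying both parts: confirming, for each type (and for each reflected variant under $D_4$), that every maximal edge of the gap lies wholly within a single side of a single bounding square rather than straddling a corner where two bounding squares meet, and that the recurring inequality $|uv|\le\ell_q$ holds. Both facts are forced by the defining boxes $B$ and the way maximal anchored squares tile $\partial B\cap\mathrm{int}(U)$ (as spelled out in the proof of Lemma~\ref{lem:classify-boundary}). I would therefore present~(ii) as a routine check on the explicit descriptions, illustrating the mechanism on type~1 — where the right edge $\{b-c\}\times[0,c]$ sits on the left side of the anchored square $[b-c,b]\times[0,c]$ and the erected square coincides with it — and noting that the remaining edges and types are entirely analogous.
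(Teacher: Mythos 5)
Your proposal is correct and follows exactly the route the paper intends: the paper offers no written proof of this corollary (it declares the properties ``easy to check'' for the five types from Lemma~\ref{lem:classify}), and your case-by-case reading of the five coordinate descriptions for (i), together with Lemma~\ref{lem:classify-boundary} and the observation that a square erected on a subsegment $uv$ of a side of an anchored square $q$ (so $|uv|\le \ell_q$) toward the interior of $q$ stays inside $q\subseteq R(S)$ for (ii), is precisely the intended verification. No gaps.
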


We say that a point $p\in \partial U$ is a \emph{lead} if it is a vertex of a maximal anchored square $q_i^j$, and $ps_i$ is a diagonal of $q_i^j$. We observe that one or two vertices of a gap along $\partial U$ is a lead.

\begin{corollary}\label{cor:lead}
\begin{itemize}%\itemsep -1pt
\item[]
\item If $C$ is a rectangular gap, then at least one endpoint of $C\cap \partial U$ is a lead,
\item otherwise both endpoints of $C\cap \partial U$ are leads.
\end{itemize}
\end{corollary}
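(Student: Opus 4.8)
The plan is to read the corollary directly off the explicit coordinate descriptions of the five gap types in Lemma~\ref{lem:classify}, together with Corollary~\ref{cor:classify-squares}(ii). First I would extract a uniform criterion for a boundary endpoint to be a lead. After the $D_4$ reduction---so that $C$ is a corner gap at the origin or a side gap on the bottom side---fix an endpoint $p$ of $C\cap\partial U$. By Corollary~\ref{cor:classify-squares}(ii), the edge of $C$ that is incident to $p$ and leaves $\partial U$ lies on a maximal anchored square $q$, and $p$ is the corner of $q$ lying on $\partial U$ on the gap's side. Since $S\subset{\rm int}(U)$, the anchor of $q$ lies off that boundary side, so it is one of the two corners of $q$ not on $\partial U$. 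By the definition of a lead, $p$ is a lead exactly when the anchor is the corner diagonally opposite $p$; equivalently, $p$ fails to be a lead precisely when the anchor lies on the line through $p$ perpendicular to the side of $U$ containing $p$.

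With this criterion I would inspect the two endpoints of $C\cap\partial U$ for each type, using the formulas of Lemma~\ref{lem:classify} to name the bounding square and locate its anchor. For the rectangular types it suffices to exhibit one lead: in type~1 the endpoint $(b-c,0)$ is the corner of $[b-c,b]\times[0,c]$ diagonally opposite $s_1=(b,c)$; in type~3 both $(a+c,0)$ and $(b-d,0)$ are diagonally opposite $s_1$ and $s_2$; in type~4 the endpoint $(a+c,0)$ is diagonally opposite $s_1$, while the other endpoint $(b,0)$ lies directly below its anchor $s_2=(b,d)$ and hence is not a lead. For the two L-shaped types I would verify both endpoints: in type~5 the endpoints $(a+c,0)$ and $(a'-c',0)$ are diagonally opposite the outer anchors $s_1$ and $s_3$; in type~2 the endpoints $(b-c,0)$ and $(0,d-a)$ are diagonally opposite $s_1$ and $s_2$. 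Each of these is a one-line check that the anchor is offset from the endpoint, horizontally or vertically, by exactly the side length of the bounding square.

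The one point that needs genuine justification---and the main obstacle---is to explain conceptually why a rectangular gap may have a non-lead endpoint whereas an L-shaped gap never does, rather than merely tabulating it. By the criterion above this reduces to whether the bounding anchor sits strictly offset from $p$ or exactly on the perpendicular line through $p$. The latter occurs for precisely two structural reasons in the rectangular types: in type~1 the single anchor $s_1=(b,c)$ lies at the same height as the left endpoint $(0,c)$; in type~4 the diagonal-producing square anchored at $s_2$ and reaching the bottom is blocked by $s_1$---which lies inside it exactly because $b-a<d$ and $c<d$---so the right wall of the gap is instead the adjacent edge of the other square at $s_2$, placing $s_2$ directly above $(b,0)$. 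The contrasting inequality $\max(c,d)<b-a$ of type~3 keeps $s_1$ clear of that square, so there both diagonals reach $\partial U$ and both endpoints are leads. In the L-shaped types the two extremal anchors sit on opposite sides of the reflex vertex and their diagonal squares reach $\partial U$ unobstructed, which is exactly what forces both endpoints to be leads. Verifying this offset-versus-blocking dichotomy against the defining inequalities of each type is the essential, if routine, final step.
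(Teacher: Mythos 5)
Your proposal is correct and takes essentially the same approach the paper intends: Corollary~\ref{cor:lead} is left as an ``easy to check'' observation over the five gap types of Lemma~\ref{lem:classify}, and your endpoint-by-endpoint verification (the lead endpoints $(b-c,0)$, $(a+c,0)$, $(b-d,0)$, $(0,d-a)$, $(a'-c',0)$ diagonally opposite their anchors, versus the non-lead endpoints $(0,c)$ in type~1 and $(b,0)$ in type~4) is exactly that check, carried out correctly. The additional ``offset versus blocking'' discussion is a nice conceptual gloss but not needed, since the corollary only asserts the existence of leads, not the non-leadness of the remaining endpoints.
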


\subsection{Charging Scheme}
\label{ssec:charging}

For every gap $C$, we define a region $R_C \subset R(S)$; and then we show that
$\area(R_C) \geq \area(C)$ and the regions $R_C$ are pairwise interior-disjoint.

For ease of exposition, we subdivide every L-shaped side gap $C$ into two rectangles $C=C_1\cup C_2$,
then define interior-disjoint regions $R_{C_1}$ and $R_{C_2}$, and let $R_C:=R_{C_1}\cup R_{C_2}$.
Specifically, let $\mathcal{C}^*$ be a set of regions that contains:
(1) all corner gaps,
(2) all rectangular side gaps, and
(3) for each L-shaped side gap $C$, the two interior-disjoint rectangles $C_1$ and $C_2$,
such that $C=C_1\cup C_2$ and both $C_1$ and $C_2$ have a common side with $\partial U$
(see Fig.~\ref{fig:gapdiv} for an example).
By Corollary~\ref{cor:lead}, at least one vertex of every rectangle in $\mathcal{C}^*$ is a lead,
and two vertices of every L-shaped corner gap in $\mathcal{C}^*$ are leads.
%
%We say that a rectangle $C\in \mathcal{C}^*$ is \emph{counterclockwise} if
%the ccw endpoint of $C\cap \partial U$ is a lead, otherwise $C$ is \emph{clockwise}.
% \csaba{cw and ccw are no longer used in the proof.}

\begin{figure}[h]
	\centering
%	\graphicspath{{./fig/}}
\includegraphics[width=.8\textwidth]{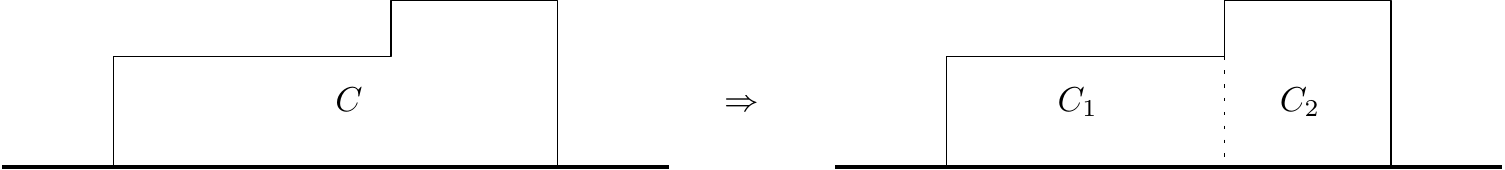}
	\caption{An L-shaped side gap $C$ is subdivided into two rectangles $C_1,C_2\in \mathcal{C}^*$.}
	\label{fig:gapdiv}
\end{figure}

We are now ready to define a region $R_C$ for each region $C\in \mathcal{C}^*$.
\begin{itemize}
\item[]
\item Let $C=(a,b,c,d)$ be a rectangle in $\mathcal{C}^*$. Assume w.l.o.g. that $bc$ is contained in the bottom side of $U$, and $c$ is a lead (a symmetric construction applies if $bc$ is contained in another side of $U$ or $b$ is the only lead). Refer to Fig.~\ref{fig:charged-region}(a).
    Let $\ell_1$ and $\ell_2$ be lines of slope 1 passing through $a$ and $c$, respectively.
    Let $p_1$ be the intersection of $\ell_1$ with the vertical line through $cd$, and let $p_2$ be the intersection of $\ell_2$ with the horizontal line through $da$. Let $z_1$ (resp., $z_2)$ be the intersection point of $\ell_2$ (resp., $\ell_1$) with the line of slope $-1$ passing through $p_1$ (resp., $p_2)$. Then $R_C$ is the smaller pentagon out of  $(a,d,c,z_1,p_1)$ and $(a,d,c,p_2,z_2)$.
\item Let $C=(a,b,c,d,e,f)$ be a L-shaped corner gap in $\mathcal{C}^*$. Assume w.l.o.g. that $b$ is the lower-left corner of $U$. By Corollary~\ref{cor:lead}, both $a$ and $c$ are leads. Refer to Fig.~\ref{fig:charged-region}(b).
    Let $\ell_1$ and $\ell_2$ be lines of slope 1 passing through $a$ and $c$, respectively.
    Let $p_1,p_2\in S$ be the anchors on $\ell_1$ and $\ell_2$, respectively (which exist since both $a$ and $c$ are leads). Let $z_1$ (resp., $z_2)$ be the intersection point of line $\ell_2$ (resp., $\ell_1$) with the line of slope $-1$ passing through $p_1$ (resp., $p_2)$. Then $R_C$ is the smaller heptagon out of  $(a,f,e,d,c,z_1,p_1)$ and $(a,f,e,d,c,p_2,z_2)$.
\end{itemize}

\vspace{-\baselineskip}

\begin{figure}[htbp]
	\centering
%	\graphicspath{{./fig/}}
\includegraphics[width=.8\textwidth]{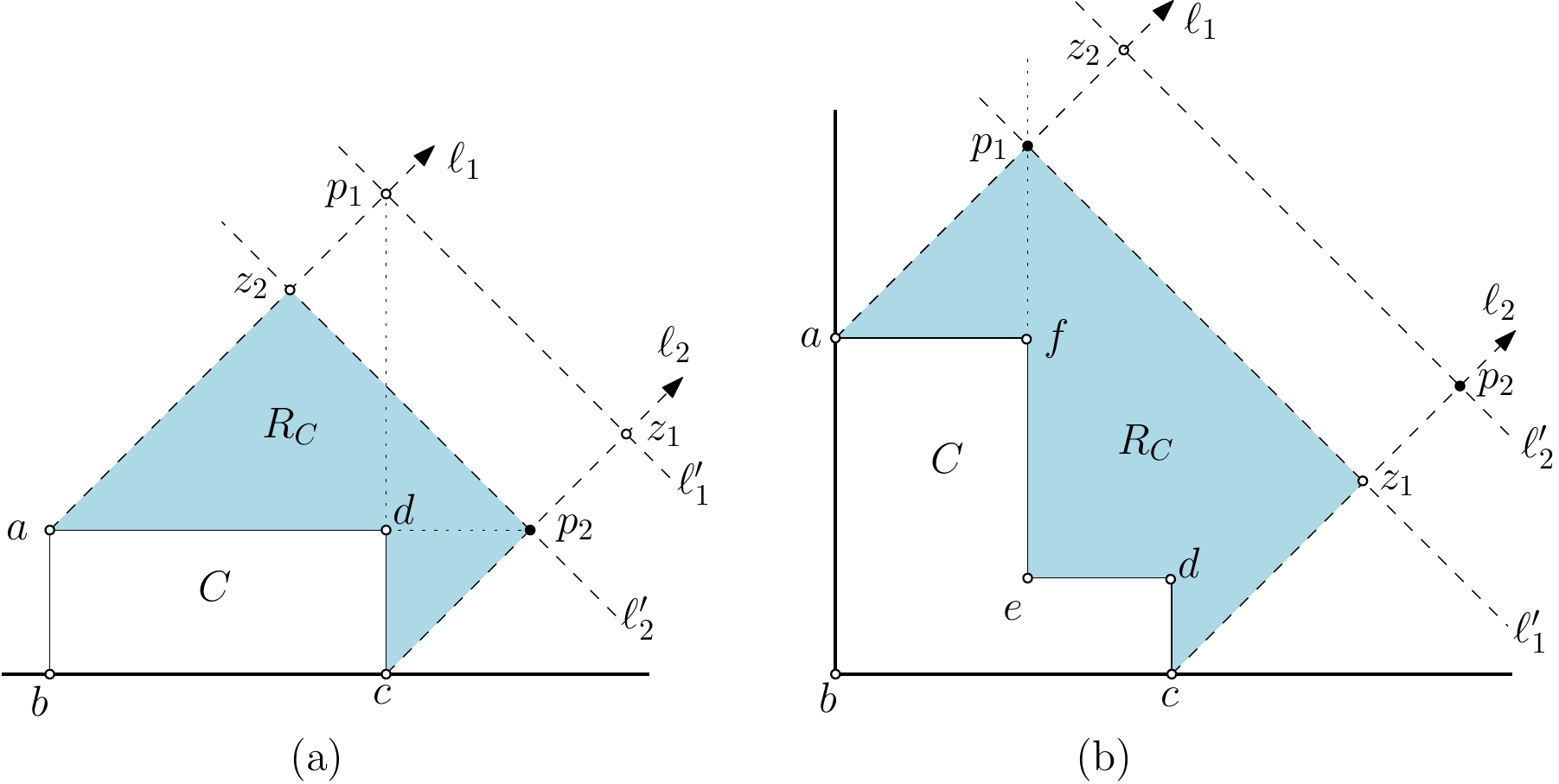}
	\caption{Region $R_C$.
    (a) $C\in \mathcal{C}^*$ is a rectangle.
    (b) $C\in \mathcal{C}^*$ is an L-shaped corner gap.}
	\label{fig:charged-region}
\end{figure}

\begin{lemma}\label{lem:properties}
For every $C\in \mathcal{C}^*$, we have:
\begin{enumerate}[label = (P\arabic*)]%\itemsep -1pt
\item\label{p:1} $R_C\subseteq R(S)$,
\item\label{p:2} ${\rm int}(R_C)$ does not contain any anchors, and
\item\label{p:3} $\area(C)\leq \area(R_C)$.
\end{enumerate}
\end{lemma}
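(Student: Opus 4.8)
The plan is to treat the two shapes that occur in $\mathcal{C}^*$ separately: rectangles (coming from rectangular side gaps, type-$1$ corner gaps, and the two pieces into which an L-shaped side gap is cut) and L-shaped corner gaps (type~2). In both cases I fix coordinates as in the construction, so that the lead $c$ lies on the bottom side of $U$, and I work with the defining box $B$ and the anchors provided by the classification in Section~\ref{ssec:classify}. I then verify \ref{p:1}--\ref{p:3} for each of the two candidate regions (the two pentagons, resp.\ the two heptagons); since $R_C$ is the smaller candidate, it inherits any property established for both, and for a property that needs the minimizer I show that the smaller candidate is exactly the safe one.

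Property \ref{p:3} I would settle by direct computation. With the rectangular gap placed at $[0,w]\times[0,h]$, the shoelace formula gives the two pentagon areas $\tfrac14(3w^2+2wh+h^2)$ and $\tfrac14(w^2+2wh+3h^2)$; subtracting $\area(C)=wh$ leaves $\tfrac14\bigl((w-h)^2+2w^2\bigr)$ and $\tfrac14\bigl((w-h)^2+2h^2\bigr)$, both nonnegative. Hence each candidate already has area at least $\area(C)$, and so does the smaller one. For an L-shaped corner gap I would write the two heptagon areas as explicit quadratics in the type-$2$ parameters and check that each is at least the two-rectangle area of $C$; this is routine but bookkeeping-heavy.

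The heart of the lemma is \ref{p:1}. Since $c$ is a lead, there is an anchor $s\in S$ on the slope-$1$ line $\ell_2$ with $cs$ a diagonal of a maximal empty square $Q_{\mathrm{lead}}$ (anchored at $s$, with $c$ the opposite corner), so $\ell_2$ carries a square diagonal and $Q_{\mathrm{lead}}\subseteq R(S)$. I would cover $R_C$ by $Q_{\mathrm{lead}}$ together with a second maximal empty square $Q'$ anchored at $s$ and extending upward; the emptiness of the defining box $B$ forces $Q'$ to be large. The slope-$\pm1$ segments of $\partial R_C$ are arcs of the diagonals of these two squares, so $R_C$ cannot escape $Q_{\mathrm{lead}}\cup Q'\subseteq R(S)$. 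The decisive estimate is that the outer vertex $z_1$ lies on the diagonal of $Q_{\mathrm{lead}}$, whose side length is at least the gap height $h$, so $z_1\in Q_{\mathrm{lead}}$ whenever $w\le h$, i.e.\ exactly when $(a,d,c,z_1,p_1)$ is the smaller pentagon; in the complementary regime $z_2$ lies on the diagonal of $Q'$ and the size bound on $Q'$ from $B$ puts it inside. For \ref{p:2} I would check that the defining anchors lie on $\partial R_C$ or outside it — the lead's anchor sits at (or beyond) a vertex of $R_C$, and the remaining defining anchors lie outside (on the far side of $\ell_1$, or cut off by a gap edge, such as $s_3$ in type~5) — so, $R_C$ being a union of empty squares whose anchors lie only on their boundaries, $\mathrm{int}(R_C)$ contains none.

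I expect \ref{p:1} to be the main obstacle, and within it the case analysis needed to locate $Q'$ and bound its side length across the five gap types. The subtlety I anticipate is that in the rectangle case the construction points $p_1,p_2$ are \emph{not} in general anchors: when the two defining anchors have unequal heights the true anchor $s$ lies farther out along $\ell_2$, so one must separately verify that $p_2$ still lies inside $Q_{\mathrm{lead}}$ and that $Q'$, whose size is controlled only through the emptiness of $B$, still reaches far enough to cover neighborhoods of the gap corners $a$ and $d$. Deriving the required uniform lower bound on the side of $Q'$ from the dimensions of $B$ — over all five types and over which vertex plays the lead — is where the real work lies; with that bound in hand, \ref{p:1}, \ref{p:2}, and the choice of the smaller region combine as above.
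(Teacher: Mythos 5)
Your shoelace computation for \ref{p:3} in the rectangular case is correct (it reproduces the paper's bound, which is obtained more lightly by exhibiting an isosceles right triangle with hypotenuse $w+h$ inside $R_C$), and your identification of the smaller pentagon with the regime $w\le h$ is right. The genuine gap is in your argument for \ref{p:1}. You propose to cover $R_C$ by two maximal empty squares both anchored at the single anchor $s$ associated with the lead $c$, on the grounds that the slope-$\pm1$ edges of $\partial R_C$ lie on diagonals of these two squares. That claim is false. Put $C=[0,w]\times[0,h]$ with lead $c=(w,0)$, so $s=(w+\sigma,\sigma)$ for some $\sigma\ge h$ (the side $\sigma$ of $Q_{\mathrm{lead}}$ must be at least $h$ to cover the right edge of the gap). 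The edge $cz_1$ does lie on the diagonal $\ell_2$ of $Q_{\mathrm{lead}}$, but the edge on $\ell_1\colon y=x+h$ lies on a line that does not pass through $s$, and the slope-$(-1)$ edge $z_1p_1$ lies on $y=-x+2w+h$, which passes through $s$ only if $\sigma=(w+h)/2$; one checks that no square anchored at $s$, empty and contained in $U$, can have either of these segments on a diagonal except in degenerate coincidences. So $R_C\not\subseteq Q_{\mathrm{lead}}\cup Q'$ in general: the part of $R_C$ near the vertex $a$ is reachable only via squares anchored at the \emph{other} defining anchors of the gap (for a type-3 gap, the square with the other defining anchor as its upper-left corner, together with the lower-left-anchored square at that same anchor). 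Knowing that the tips $z_1,p_1$ individually lie in $R(S)$ does not close this, since neither the pentagon nor $R(S)$ is convex.

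The paper avoids per-square covering altogether: by the classification of Section~\ref{ssec:classify}, each gap $C^*$ has a defining box $B$ with no anchor in ${\rm int}(B)$ and with $B\setminus C^*\subset R(S)$ (the squares anchored at the one to three defining anchors cover everything in $B$ except the gap itself); it then suffices to check that $p_1,p_2\in B$, whence $R_C\subset B\setminus C^*$, giving \ref{p:1} and \ref{p:2} in one stroke. If you insist on an explicit covering you must bring in all defining anchors and case-split over the five types, which amounts to re-deriving $B\setminus C^*\subset R(S)$. Finally, your \ref{p:3} for L-shaped corner gaps is only a promissory note, and the promised ``explicit quadratic'' is not available as stated: the heptagon's outer vertices $p_1,p_2$ are genuine anchors whose positions are not functions of the gap's dimensions alone, so the area is not determined by the type-2 parameters. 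The paper's decomposition of $R_C$ into four isosceles right triangles $T_1,\dots,T_4$, each compared with one rectangle of $C$, is the uniform way through.
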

\begin{proof}
The region $C\in \mathcal{C}^*$ is either a gap or a rectangle within an L-shaped side gap; Fig.~\ref{fig:charged-region}(b). Let $C^*$ be the gap that contains $C$,
and $B$ the box defining the gap $C^*$. For all five types of gaps, ${\rm int}(B)$ does not contain any anchor, and $B\setminus C^*\subset R(S)$. By Corollaries~\ref{cor:classify-squares}(ii) and \ref{cor:lead}, the points $p_1$ and $p_2$ lie in $B$. Consequently, $R_C\subset B$, hence $R_C\subset B\setminus C^*$. This confirms (P1) and (P2).

To prove (P3), we distinguish two cases. First assume that $C$ is an $x\times y$ rectangle.
Let $T$ be an isosceles right triangle whose hypotenuse has length $x+y$.
It is easy to check that $\area(C) \leq \area(T)$. Indeed,
$\area(T)
 = \left(\frac{1}{2}(x+y)\right)^2
 = \frac{1}{2}\left(\frac{x^2}{2} + xy + \frac{y^2}{2}\right)
 \geq xy
 =\area(C)$.
By definition, $R_C$ contains a triangle congruent to $T$, consequently
$\area(C) \leq \area(T)\leq \area(R_C)$, as claimed.

Next assume that $C\in \mathcal{C}^*$ is an L-shaped corner gap; Fig.~\ref{fig:charged-region}(b).
Assume that $C$ is formed by three interior-disjoint axis-aligned rectangles
defined by diagonals $ae$, $be$, and $ce$.
Let their dimensions respectively be $x\times y$, $x\times z$, and $w\times z$.
Let $T_1$ and  $T_2$ be isosceles right triangles whose hypotenuses are of length $x+y$ and $w+z$, respectively.
Let $T_3$ and $T_4$ be isosceles right triangles whose legs are of length $x$ and $z$, respectively.
By definition, $R_C$ contains interior-disjoint triangles congruent to $T_1$, $T_2$, $T_3$, and $T_4$: the hypotenuses of the respective triangles are in the same supporting lines as $ef$, $ed$, $ap_1$, and $cp_2$ respectively.
Using the same argument as in the previous case, we can show that $\area(T_1)$ and $\area(T_2)$ are, respectively,
greater or equal than the areas of the $x\times y$ and $w\times z$ rectangles.
It remains to show that $\area(T_3)+\area(T_4)$ is greater or equal than the area of the $x\times z$ rectangle.
By definition, we have $\area(T_3)+\area(T_4)=\left(\frac{x^2}{2} + \frac{z^2}{2}\right)\ge xz$ for all $x,z>0$.
\end{proof}

%We prove that the regions $R_C$, $C\in \mathcal{C}^*$, are pairwise interior-disjoint in the full paper.
%\begin{lemma}\label{lem:disjoint}
%For every two regions $C,C'\in \mathcal{C}^*$, $C\neq C'$,
%we have ${\rm int}(R_C)\cap {\rm int}(R_{C'})=\emptyset$.
%\end{lemma}

%%%%From appendix

For every $C\in \mathcal{C}^*$, the region $C\cup R_C$ is a convex pentagon, that we denote by $P_C$.
The definition of $R_C$ immediately implies the following properties:
\begin{observation}\label{obs:pent1}
	\begin{enumerate}%\itemsep -2pt
		\item[]
		\item Two consecutive sides of $P_C$ are common with of $C$, hence they are axis-parallel.
		\item Three consecutive sides of $P_C$ have slope $\pm 1$, and lie in $R(S)$;
		they are each diagonals of axis-aligned squares that lie in $R(S)$.
		\item $P_C$ and $C$ have three common vertices, one of them is a lead.
	\end{enumerate}
\end{observation}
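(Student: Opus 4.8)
The final statement is Observation~\ref{obs:pent1}, which catalogues three structural facts about the convex pentagon $P_C = C \cup R_C$. The plan is to read these properties directly off the construction of $R_C$ given just before Lemma~\ref{lem:properties}, handling the two cases ($C$ a rectangle, $C$ an L-shaped corner gap) in parallel. Since the statement merely asserts that the definition ``immediately implies'' these properties, the proof should be short and verification-style rather than requiring any new geometric idea.

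First I would establish that $P_C$ is indeed a convex pentagon with the claimed vertex structure. For a rectangular $C = (a,b,c,d)$ with $bc$ on the bottom side of $U$ and $c$ a lead, the region $R_C$ is (say) the pentagon $(a,d,c,z_1,p_1)$, so $C \cup R_C$ has vertices $b, c, z_1, p_1, d$ — the two gap-interior vertices $a$ and $d$ of $R_C$ coincide with gap vertices, and $a$ gets absorbed along the shared edge $ad$, leaving exactly five vertices. The two sides $bc$ and $cd$ (equivalently the sides coming from $C$) are axis-parallel by construction, giving the first property. For the second property, I would invoke Observation~\ref{obs:pent1}'s own setup: the three remaining sides of $P_C$ are carried by the lines $\ell_1, \ell_2$ of slope $1$ and the slope-$(-1)$ line through $p_1$, so each has slope $\pm 1$; that each is the diagonal of an axis-aligned square lying in $R(S)$ follows because these sides are built on the diagonals $ap_1$, the square edge at the lead $c$, and the connecting segment, all of which lie in $R_C \subseteq R(S)$ by \ref{p:1} of Lemma~\ref{lem:properties}. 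The third property is immediate: $P_C$ and $C$ share the three vertices $b, c$, and $d$ (the corner, the lead, and one adjacent vertex), and $c$ is a lead by the case hypothesis.

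For the L-shaped corner gap case $C = (a,b,c,d,e,f)$ with $b$ the lower-left corner of $U$, the same reading applies, now with $R_C$ the heptagon $(a,f,e,d,c,z_1,p_1)$. Here both $a$ and $c$ are leads (Corollary~\ref{cor:lead}), $p_1, p_2 \in S$ are the anchors on $\ell_1, \ell_2$, and the three slope-$\pm1$ sides are again the segments along $\ell_1$, $\ell_2$, and the slope-$(-1)$ line, each a diagonal of an empty anchored square in $R(S)$. The two axis-parallel sides are the portions of $\partial C$ meeting the corner $b$, and $P_C, C$ share $b$ together with the two leads, so all three properties hold verbatim.

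The only genuine point requiring care — and the step I would slow down on — is confirming that $P_C = C \cup R_C$ is actually \emph{convex} and has exactly \emph{five} sides, so that the vertex-counting in property~3 and the ``two axis-parallel / three diagonal'' split in properties~1--2 are exhaustive. Convexity is forced by the fact that the two axis-parallel sides meet at the right-angle corner of the gap (interior angle $\tfrac{\pi}{2}$) and the three diagonal sides turn consistently through angles determined by the slopes $\pm1$; I would verify the interior angles sum to $3\pi$ and each is less than $\pi$. I expect no obstacle beyond this bookkeeping, since the geometry of $R_C$ was engineered precisely so that $C \cup R_C$ closes up into a single convex pentagon.
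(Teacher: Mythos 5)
The paper offers no proof of this observation---it is stated as an immediate consequence of the definition of $R_C$---so unpacking the definition, as you do, is the right and only approach. However, your bookkeeping in the rectangle case is off in a way that matters for exactly the properties being claimed. With $C=(a,b,c,d)$, $bc$ on the bottom of $U$, and $R_C=(a,d,c,z_1,p_1)$, the two polygons share the \emph{two} edges $ad$ and $dc$; the vertex absorbed into the interior of $P_C$ is therefore $d$ (the common endpoint of both shared edges), not $a$. The correct conclusion is $P_C=(a,b,c,z_1,p_1)$: the two consecutive axis-parallel sides of $P_C$ are $ab$ and $bc$ (the sides of $C$ meeting at the $\partial U$-vertex $b$), not $bc$ and $cd$ ($cd$ is not a side of $P_C$ at all), and the three common vertices are $a,b,c$, not $b,c,d$. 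This is consistent with how the pentagon is written later in the paper (Case~2 of Lemma~\ref{lem:disjoint} uses $P_C=(a,b,c,t,z)$ for $C=(a,b,c,d)$). Your L-shaped case gets the analogous count right, which suggests the rectangle case is a slip rather than a misunderstanding, but since the observation is precisely about which sides and vertices survive in $P_C$, the slip should be fixed.

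The second genuine weak point is your justification of property~2. Citing \ref{p:1} ($R_C\subseteq R(S)$) only shows that the three slope-$\pm 1$ sides \emph{lie in} $R(S)$; it does not show that each is the diagonal of an axis-aligned \emph{square} contained in $R(S)$, which is the part of the observation actually used later (to place translated right triangles with hypotenuse on these edges inside $R(S)$). For the side through the lead $c$ you need that it is a subsegment of the diagonal $cs$ of the maximal anchored square at $s$, so the square it spans sits inside that anchored square; for the side through $a$ you need Corollary~\ref{cor:classify-squares}(ii) applied to the gap edge $ad$ (resp.\ the relevant edges in the L-shaped case); and for the sentinel side a separate argument is needed, since neither $R_C$ nor the lead's square alone contains the full square spanned by it. As written, your one-line appeal to \ref{p:1} does not deliver the statement.
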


\begin{figure}[htbp]
	\centering
	\includegraphics[width=.8\textwidth]{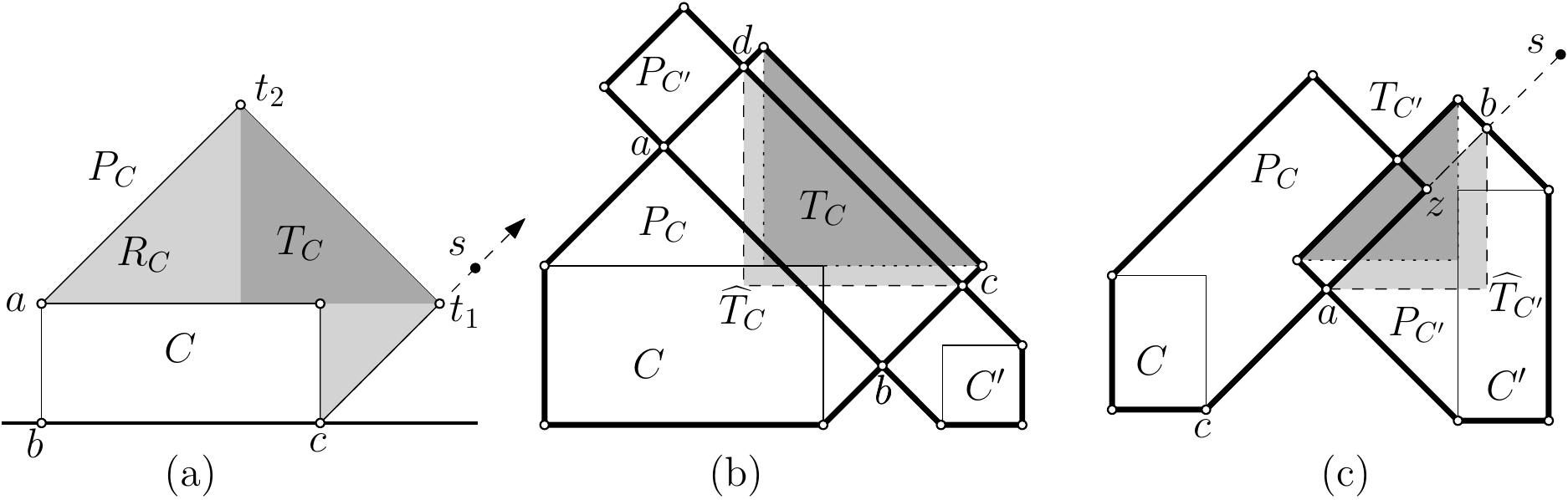}
	\caption{(a) Pentagon $P_C=C\cup R_C$, tips $t_1$ and $t_2$, sentinel edge $t_1t_2$,
		and the sentinel triangle $T_C$. If $c$ is a lead vertex, then $t_1$ is a safe tip.
		(b) If ${\rm int}(R_C)\cap{\rm int}(R_{C'})\neq \emptyset$, then $\widehat{T}_C$ covers part of $C$.
		(c) If $z$ is a safe tip of $P_C$ and $z\in {\rm int}(P_{C'})$, then $\widehat{T}_{C'}$ covers part of $C'$.}
	\label{fig:sentinel}
\end{figure}

We introduce some terminology. The two vertices of $P_C$ that are not incident to $C$ are called the \emph{tips} of $P_C$. The line segment between the two tips is called the \emph{sentinel side} of $P_C$, and the isosceles right triangle in $P_C$ whose hypotenuse is the sentinel side is called the \emph{sentinel triangle}, denoted $T_C$. See Fig.~\ref{fig:sentinel}(a)

\begin{lemma}\label{lem:sentinel}
	For every $C\in \mathcal{C}^*$,
	%\begin{enumerate}[label = (P\arabic*)]\itemsep -1pt
	%\item[(p4)]\label{p:4}
	the triangle $T_C$ is adjacent to $C$ (i.e., $\partial T_C\cap \partial C\neq \emptyset$).
	%\end{enumerate}
\end{lemma}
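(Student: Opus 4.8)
The plan is to realize $T_C$ as one half of an axis‑aligned square and to locate its right‑angle vertex explicitly. By Observation~\ref{obs:pent1}, the sentinel side is a diagonal of an axis‑aligned square $\square_C$ that lies in $R(S)$, and $T_C$ is exactly the half of $\square_C$ cut off by this diagonal on the side of $C$; hence the apex $v$ (the right‑angle vertex of $T_C$) is a corner of $\square_C$ and the two legs at $v$ are horizontal and vertical. Since $\square_C\subseteq R(S)$ is interior‑disjoint from the gap, $v\notin\mathrm{int}(C)$, so it suffices to prove that $v\in\partial C$, or failing that, that one of the two legs crosses $\partial C$; either conclusion gives $\partial T_C\cap\partial C\neq\emptyset$. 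Using the $D_4$ symmetry I normalize so that the common corner $b$ of $P_C$ and $C$ is the lower‑left corner, the edge through $b$ and the lead $c$ is horizontal, the sentinel side has slope $-1$, and $v$ is the lower‑left corner of $\square_C$; I then split into the two shapes allowed for $C\in\mathcal C^*$, a rectangle and an L‑shaped corner gap.

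For a rectangular $C=[0,x]\times[0,y]$ with lead $c=(x,0)$ the computation is short: the construction yields $z_1=((3x+y)/2,(x+y)/2)$, $p_1=(x,x+y)$, and $v=(x,(x+y)/2)$, so $v$ lies on the vertical line $X=x$ carrying the edge $cd$, and $v\in cd$ exactly when $(x+y)/2\le y$, i.e.\ when $x\le y$. A direct area comparison shows that $x\le y$ is precisely the condition under which $(a,d,c,z_1,p_1)$ is the smaller of the two candidate pentagons, so the choice of $R_C$ guarantees $v\in\partial C$; the complementary case $x\ge y$ is handled by the mirror construction, where the horizontal leg of $T_C$ lands on the top edge of $C$. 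Thus for rectangles the smaller‑pentagon rule and the apex‑on‑boundary condition coincide.

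The L‑shaped corner gap is the main obstacle, and it differs essentially from the rectangle case: here both endpoints $a,c$ of $C\cap\partial U$ are leads, and the far tips are the \emph{actual} anchors $p_1\in\ell_1$, $p_2\in\ell_2$, so their positions are not pinned to the edges of $C$. I would again compute $v$ in closed form in terms of the two arm dimensions of the L and the side lengths $\lambda_1,\lambda_2$ of the two maximal lead‑squares, and show that the smaller‑heptagon inequality—which reduces to comparing the intercepts $X+Y$ of the two candidate sentinel sides—drives the relevant leg of $T_C$ onto an inner edge ($ef$ or $de$) of $C$. The delicate point, where the classification of Section~\ref{ssec:classify} must be used, is that $\lambda_1,\lambda_2$ are not free: the emptiness and maximality of the lead‑squares and the defining inequality of a type‑2 gap constrain $\lambda_1,\lambda_2$ against the arm dimensions. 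I expect the crux to be checking that, \emph{once the smaller region is selected}, these constraints are exactly strong enough to keep $v$ within the strip of the relevant edge of $C$; matching this adjacency inequality with the area inequality is the step I would treat most carefully, since a naive choice of $\lambda_1,\lambda_2$ (ignoring the classification) can push $v$ outside $C$.
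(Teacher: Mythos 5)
Your rectangle case is correct and complete, and in fact supplies the computation that the paper compresses into ``follows immediately from the definition of $R_C$'': the apex $v=(x,(x+y)/2)$ lies on $cd$ exactly when $x\le y$, the area comparison $\frac14(3x^2+2xy+y^2)$ versus $\frac14(x^2+2xy+3y^2)$ shows the smaller-pentagon rule selects $(a,d,c,z_1,p_1)$ exactly when $x\le y$, and the mirror case is symmetric. The observation that the smaller-region selection is precisely what forces adjacency in the rectangular case is right, and it is genuinely needed there (the larger pentagon's sentinel triangle is \emph{not} adjacent to $C$ when $x>y$).

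The L-shaped corner gap case, however, is a genuine gap in your write-up: you describe a plan (``I would again compute $v$ in closed form\ldots'', ``I expect the crux to be checking\ldots'') but never carry out the verification, and you yourself flag the decisive inequality --- that the type-2 constraints on the anchor positions, combined with the smaller-heptagon condition, keep the relevant leg of $T_C$ on an inner edge of $C$ --- as unchecked. Since this is the case you correctly identify as the main obstacle, the lemma is not proved. It is also worth noting that the paper takes a different and shorter route here, one that does not use the area comparison at all: for an L-shaped corner gap, whichever heptagon is chosen, one tip of $P_C$ is an actual anchor $s\in S$ on the boundary of the defining box $B$, the sentinel edge lies along the diagonal of a maximal empty square $q$ anchored at $s$, and $q$ is adjacent to $C$ by the definition of type-2 gaps; hence $T_C\subseteq q$ touches $C$. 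This structural argument shows adjacency holds for \emph{both} candidate heptagons in the L-shaped case, so your planned reliance on the smaller-heptagon inequality is not wrong but is heavier than necessary --- and in any event the computation it requires is missing.
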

\begin{proof}
	If $C$ is a rectangle, then the claim follows immediately from the definition of $R_C$.
	Assume that $C$ is an L-shaped corner gap of type~2. Then by the definition of $R_C$,
	one of the tips of $P_C$ is an anchor $s\in S$ on the boundary of the box $B$ that defines $C$.
	The sentinel edge lies along the diagonal of a maximal empty square $q$ anchored at $s$.
	The square $q$ is adjacent to $C$ by the definition of gaps of type~2.
\end{proof}

Furthermore, a tip of $P_C$ is called \emph{safe} if it an adjacent vertex of $P_C$ is a lead vertex of $C$.
By Corollary~\ref{cor:lead}, at least one of the two tips $P_C$ is safe; and if $C$ is an L-shaped corner gap,
then both tips are safe.

\begin{lemma}\label{lem:vertex}
	If ${\rm int}(R_C)\cap{\rm int}(R_{C'})\neq \emptyset$ for some $C,C'\in \mathcal{C}^*$, $C\neq C'$,
	then
	\begin{enumerate}%\itemsep -2pt
		\item all intersection points in $\partial P_C\cap \partial P_{C'}$ involve edges of slopes $\pm 1$;
		\item a tip of $P_C$ or $P_{C'}$ lies in ${\rm int}(R_{C'})$ or ${\rm int}(R_C)$, respectively.
	\end{enumerate}
\end{lemma}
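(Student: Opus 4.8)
The plan is to exploit two disjointness identities coming from the charging construction, reduce the claim to locating a pentagon vertex inside the other pentagon, and only then confront the genuinely geometric step.

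First I would record the disjointness. Since $R_{C'}\subseteq R(S)$ (Lemma~\ref{lem:properties}, property \ref{p:1}) while the gap containing $C$ lies in $U\setminus R(S)$, and since distinct gaps are disjoint, no point of ${\rm int}(C)$ can lie in ${\rm int}(P_{C'})\subseteq{\rm int}(C')\cup{\rm int}(R_{C'})\cup(\text{shared edge})$. Hence ${\rm int}(C)\cap{\rm int}(P_{C'})=\emptyset$, and symmetrically ${\rm int}(C')\cap{\rm int}(P_C)=\emptyset$. A useful consequence is that the overlap $O:=P_C\cap P_{C'}$ -- a convex region with nonempty interior, since it contains ${\rm int}(R_C)\cap{\rm int}(R_{C'})$ -- avoids both gaps, so in fact $O\subseteq R_C\cap R_{C'}\subseteq R(S)$.

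For part~1 I would argue that $\partial O$ cannot run along an axis-parallel side of either pentagon. Indeed, by Observation~\ref{obs:pent1} the two axis-parallel sides of $P_C$ are sides of the gap $C$, so immediately inside such a side lies ${\rm int}(C)$; if a segment of $\partial O$ lay on it, then ${\rm int}(O)\subseteq{\rm int}(P_{C'})$ would meet ${\rm int}(C)$, contradicting the disjointness above. The same holds for $P_{C'}$. Therefore every side of $O$, and in particular every point of $\partial P_C\cap\partial P_{C'}$ bounding the overlap, lies on slope-$\pm1$ sides of \emph{both} pentagons; under the general-position assumption of Section~\ref{ssec:classify} these meetings are transversal crossings rather than incidences at vertices, which is the assertion of part~1.

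For part~2 I would first reduce to finding an engulfed vertex. If a vertex $v$ of $P_C$ lies in ${\rm int}(P_{C'})$, then $v$ cannot be one of the three gap-vertices $P_C$ shares with $C$, since a neighborhood of such a vertex inside $P_C$ meets ${\rm int}(C)$, which is excluded; hence $v$ is a tip. Moreover $v$ is a vertex of $R_C\subseteq R(S)$, so $v\notin{\rm int}(C')$, forcing $v\in{\rm int}(R_{C'})$ -- exactly the desired conclusion, and symmetrically for vertices of $P_{C'}$. It therefore suffices to show that at least one pentagon has a vertex strictly inside the other. The hard part is ruling out the remaining ``plus-sign'' configuration, in which every vertex of $O$ is a transversal crossing, so that neither pentagon has a vertex inside the other. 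By part~1 all sides of $O$ have slope $\pm1$, so $O$ is a parallelogram; its opposite sides being parallel forces $P_C$ to contribute its two equal-slope cap sides (the outer sides adjacent to the gap-vertices) and $P_{C'}$ its two cap sides of the opposite slope. Purely combinatorially this is possible, so the argument must use the charging geometry. Here I would invoke Lemma~\ref{lem:sentinel}: the sentinel triangle $T_C$, built on the middle cap side $t_1t_2$, is adjacent to $C$, so the gap presses against the sentinel side of each pentagon, while $O\subseteq R(S)$ must avoid both gaps. I expect to derive a contradiction by showing that two caps anchored to their gaps along their sentinel sides cannot interlock inside $R(S)$ without one pentagon's tip crossing the middle cap side of the other and entering its charged region, the safe tip guaranteed by Corollary~\ref{cor:lead} being the natural vertex to trap. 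Translating the sentinel-adjacency together with $O\subseteq R(S)$ into the impossibility of the plus-sign is where the real work lies; everything else is bookkeeping with the two disjointness identities.
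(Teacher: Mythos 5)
Your treatment of part~1 and your reduction of part~2 to ``some vertex of one pentagon lies strictly inside the other'' are both sound and follow essentially the same lines as the paper: the axis-parallel sides of $P_C$ and $P_{C'}$ are sides of the (interior-disjoint) regions $C$ and $C'$, the slope-$\pm1$ sides lie in $R(S)$, so only slope-$\pm1$ sides can cross; and an engulfed vertex must be a tip lying in the interior of the other charged region. The problem is the last step. You correctly isolate the remaining ``plus-sign'' configuration, in which $O=P_C\cap P_{C'}$ is a parallelogram (in fact a rectangle with sides of slopes $\pm1$) and neither pentagon has a vertex inside the other, and you correctly name the tools (Lemma~\ref{lem:sentinel}, Observation~\ref{obs:pent1}(2)), but you do not actually derive the contradiction: you write that you ``expect'' to show the configuration forces a tip to cross into the other charged region. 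That cannot be the mechanism --- in the case you are trying to kill, no vertex of either pentagon lies inside the other by hypothesis, so there is no tip to trap; the contradiction has to come from covering part of a gap, not from locating a vertex.

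The paper's argument for this case is short but essential, and it is the piece your proposal is missing. The four sides of the rectangle $D=P_C\cap P_{C'}$ alternate between the two pentagons; the two opposite sides of $D$ that lie on (slope-$\pm1$) edges of $P_{C'}$ have their endpoints on the two parallel cap sides of $P_C$, hence each is a translate of the sentinel edge of $P_C$. At least one of them, say $ab$, is not the sentinel edge itself, so inside $P_C$ it separates the sentinel edge from $C$. Now translate the sentinel triangle $T_C$ --- which is adjacent to $C$ by Lemma~\ref{lem:sentinel} --- so that its hypotenuse becomes $ab$: the translate $\widehat{T}_C$ then meets ${\rm int}(C)$. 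On the other hand, $ab$ lies on a slope-$\pm1$ edge of $P_{C'}$, which by Observation~\ref{obs:pent1}(2) is a diagonal of an axis-aligned square contained in $R(S)$, so $\widehat{T}_C\subset R(S)$. This contradicts $C$ being part of a gap. Without this translation argument (or an equivalent one), part~2 of the lemma is not proved.
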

\begin{proof}
	Note that $P_C$ and $P_{C'}$ each has two axis-parallel sides, which are edges of $C$ and $C'$, respectively. These edges do not cross each other (since the regions in $\mathcal{C}^*$ are interior-disjoint), and they do not cross any edge of $P_C$ and $P_{C'}$ of slope $\pm 1$ (since those edges are contained in $R(S)$). Consequently, all intersection points in $\partial P_C\cap \partial P_{C'}$ involve edges of slopes $\pm 1$.
	This proves the first claim.
	
	Note also that the regions in $\mathcal{C}^*$ are pairwise interior-disjoint and they all lie in $U\setminus R(S)$. The pentagon $P_C$ (resp., $P_{C'}$) has three common vertices with $C$ (resp., $C'$), and these vertices are certainly not in ${\rm int}(P_{C'})$ (resp., ${\rm int}(P_{C})$). Therefore, it is enough to show that a vertex of pentagon $P_C$ or $P_{C'}$ lies in the interior of the other pentagon.
	
	Suppose, to the contrary, that neither ${\rm int}(P_C)$ nor ${\rm int}(P_{C'})$ contains any vertex of the other pentagon. Let $D=P_C\cap P_{C'}$. Since both pentagons are convex, $D$ is a convex polygon, which is not incident to any vertex of $P_C$ or $P_{C'}$. Therefore, $D$ is a convex quadrilateral in which all vertices are intersection points between some sides of $P_C$ and $P_{C'}$. By claim~1, $D$ is a rectangle
	whose edges are of slope $\pm 1$.
	
	Let $D=(a,b,c,d)$ such that $ab$ and $cd$ lie in parallel edges of $P_{C'}$; Fig.~\ref{fig:sentinel}(b).
	Then $ab$ and $cd$ are translates of the sentinel edge of $P_C$. We may assume w.l.o.g. that $ab$ is not equal to the sentinel edge of $P_C$, and so $ab$ separates the sentinel edge of $P_C$ from $C$.
	By Lemma~\ref{lem:sentinel}, the sentinel triangle $T_C$ is adjacent to $C$. Translate $T_C$ to
	a right triangle $\widehat{T}_C$ of hypotenuse $ab$. Then ${\rm int}(C)\cap {\rm int}(\widehat{T}_C)\neq \emptyset$. However, by Observation~\ref{obs:pent1}(2), $\widehat{T}_C\subset R(S)$, contradicting
	our assumption that $C$ is part of a gap.
\end{proof}

We are now ready to show that the regions $R_C$, $C\in \mathcal{C}^*$, are interior-disjoint.

\begin{lemma}\label{lem:disjoint}
	For every two regions $C,C'\in \mathcal{C}^*$, $C\neq C'$,
	we have ${\rm int}(R_C)\cap {\rm int}(R_{C'})=\emptyset$.
\end{lemma}
\begin{proof}
	Suppose, to the contrary, that there are two regions $C,C'\in \mathcal{C}^*$, $C\neq C'$,
	such that ${\rm int}(R_C)\cap {\rm int}(R_{C'})\neq\emptyset$. By Lemma~\ref{lem:vertex}(2),
	a tip of $P_C$ or $P_{C'}$ lies in the interior of $R_{C'}$ or $R_C$.
	We distinguish two cases.
	
	\subparagraph{Case~1. A safe tip of $P_C$ lies in ${\rm int}(R_{C'})$,
		or a safe tip of $P_{C'}$ lies in ${\rm int}(R_{C})$.}
	Assume w.l.o.g. that $z$ is a safe tip of $P_C$ and $z\in {\rm int}(R_{C'})$.
	Assume further that $c$ is a lead vertex of $C$ lying on the bottom side of $U$, and $cz$ is a side of of $P_C$ of slope $1$. Since $c$ is a lead, there exists an anchor $s\in S$ such that $cs$ is the diagonal
	of a maximal square anchored at $s$. Consequently, $cz\subseteq cs$, and $cs\subset R(S)$.
	
	Note that $c\not\in {\rm int}(P_{C'})$ since $c\in \partial U$, and $s\not\in {\rm int}(P_{C'})$ by \ref{p:2},
	but $z\in {\rm int}(R_{C'})$. Therefore segment $cs$ crosses the boundary of $P_{C'}$ twice.
	Since $cs\subset R(S)$ and has slope $1$, it cross two parallel edges of $P_{C'}$ of slope $-1$.
	Let $ab=cs\cap P_{C'}$, and note that $ab$ is a translate of the sentinel edge of $P_{C'}$.
	By Lemma~\ref{lem:sentinel}, the sentinel triangle $T_{C'}$ is adjacent to $C'$. Translate $T_{C'}$ to
	a right triangle $\widehat{T}_{C'}$ of hypotenuse $ab$. Then ${\rm int}(C')\cap {\rm int}(\widehat{T}_{C'})\neq \emptyset$. However, by Observation~\ref{obs:pent1}(2), $\widehat{T}_{C'}\subset R(S)$, contradicting
	our assumption that $C'$ is part of a gap.
	
	\subparagraph{Case~2. None of safe tips of $P_C$ and $P_{C'}$, resp., lies in ${\rm int}(R_{C'})$ and ${\rm int}(R_{C})$.} We may assume w.l.o.g. that $z$ is an unsafe tip of $P_C$ that lies in ${\rm int}(R_{C'})$.
	By Corollary~\ref{cor:lead}, $P_C$ has at least one safe tip, and if it has only one safe tip,
	then $C$ is a rectangle. Therefore, $z$ is adjacent to a safe tip in $P_C$,
	and by the definition of safe tips, the second neighbor of $z$ is a lead vertex, which lies in $\partial U$.
	It follows that the side of $P_C$ opposite to $z$ is contained in $\partial U$.
	We may assume that $P_C=(a,b,c,t,z)$, $C=(a,b,c,d)$, where $c$ is a lead vertex, $t$ is a safe tip, and $bc$ is contained in the bottom side of $U$; refer to Fig,~\ref{fig:opposite}(b)-(c).
	
	We claim that $P_{C'}$ also has an unsafe tip $z'$ that lies in ${\rm int}(R_C)$.
	Suppose, to the contrary, and ${\rm int}(P_C)$ does not contain any unsafe tip of $P_{C'}$.
	Then by assumption ${\rm int}(P_{C'})$ contains exactly one vertex of $P_C$, namely $z$,
	and ${\rm int}(P_C)$ does not contain any vertex of $P_{C'}$.
	The two intersection points
	in $\partial P_C\cap \partial P_{C'}$ are between a single side $uv$ of $P_{C'}$ and
	the two sides of $P_C$ incident to $z$. The sides incident to $z$ have slopes $\pm 1$,
	so $uv$ cannot have slope $\pm 1$, contradicting Lemma~\ref{lem:vertex}(1).
	This confirms the claim.
	
	It follows that $C'$ is also a rectangle, and the side of $P_{C'}$ opposite to $z'$ is contained in $\partial U$.
	Let $P_{C'}=(a',b',c',t',z')$, where $t'$ is a safe tip, $c'$ is a lead, and $b'c'\subset \partial U$.
	Since the path $(t,z,a)$ and $(t',z',a')$ cross twice, $b'c'$ is contained in the top side of $U$;
	as indicated in Fig.~\ref{fig:opposite}(b)-(c).
	
	\begin{figure}[htbp]
		\centering
		\includegraphics[width=.9\textwidth]{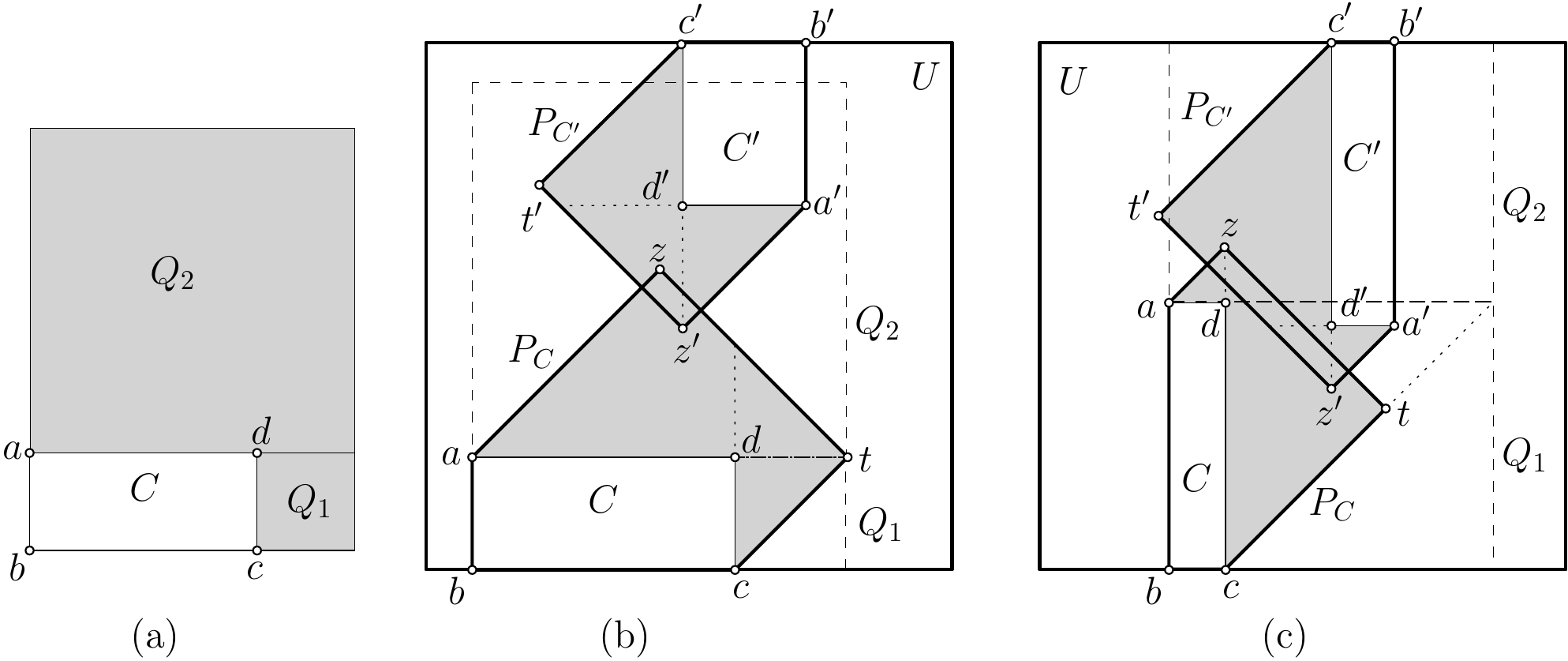}
		\caption{(a) A rectangle $C=(a,b,c,d)$, square $Q_1$ built on $cd$ to the right of $C$,
			and square $Q_2$ built on the top side of $C\cup Q_1$.
			(b)-(c) An unsafe tip $z$ of $P_C$ lies in ${\rm int}(P_{C'})$, and
			an unsafe tip $z'$ of $P_{C'}$ lies in ${\rm int}(P_C)$.}
		\label{fig:opposite}
	\end{figure}

	Since the roles of $C$ and $C'$ are symmetric, we may assume w.l.o.g. that $|a'z'|\leq |az|$.
	Let $Q_1$ be the square built on $cd$ to the right of $C$, and $Q_2$ the square
	built on the top side of the rectangle $C\cup Q_1$; see Fig.~\ref{fig:opposite}(a).
	Both $Q_1 $and $Q_2$ are covered, that is, $Q_1,Q_2\subset R(S)$.
	Indeed, $Q_1\subset R(S)$ by Corollary~\ref{cor:classify-squares}(ii).
	Let $Q$ be the anchored square whose boundary contains top side $ad$ of $C$.
	By the characterization of the gaps, $Q_2 \subseteq Q\subseteq R(S)$.
	
	We next claim that ${\rm int}(Q_2)\cap {\rm int}(C')\neq \emptyset$. By the definition of $R_C$, we have $t\in Q_1$ (possibly, $ct$ is the diagonal of $Q_1$). Consequently, the projection of $C\cup Q_1\cup Q_2$ to the $x$-axis contains that of $P_C$. Let $\ell'$ be a vertical line through $z'$. By the definition of $R_{C'}$, $\ell'\cap C' \neq \emptyset$, hence $\ell'\cap a'd'\neq \emptyset$. Since $z'\in {\rm int}(P_C)$, and $Q_2$ is built on top of $C\sup Q_1$, the line $\ell'$ passes through ${\rm int}(Q_2)$. We show that $\ell'\cap a'd'\in \rm{int}(Q_2)$ by comparing $y$-coordinates: $z'\in {\rm int}(P_C)$ gives $y(z')<y(z)$, and $|a'z'|\leq |az|$ implies $y(a')<y(z)+(y(z)-y(a))=2y(z)-y(a)$. Hence $a'$ lies below the top side of $Q_2$, which yields $\ell'\cap a'd'\in \rm{int}(Q_2)$. Finally, note that a neighborhood of $\ell'\cap a'd'$ intersects ${\rm int}(C')$ and lies in ${\rm int}(Q_2)$. This establishes the claim.
	
	We have proved that $Q_2\subset R(S)$ and ${\rm int}(Q_2)\cap {\rm int}(C')\neq \emptyset$.
	This contradicts the assumption that $C'$ is part of a gap.
	Both cases lead to a contradiction, therefore ${\rm int}(R_C)\cap {\rm int}(R_{C'})=\emptyset$.
\end{proof}

%%%%End appendix

\begin{theorem}\label{thm:area}
    For every finite set $S \subset U$, we have $\area(R(S))\geq \frac{1}{2}$,
    and this bound is the best possible.
\end{theorem}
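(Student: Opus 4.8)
The plan is to assemble the machinery built up in Sections~\ref{ssec:basics}--\ref{ssec:charging} into a single area-counting argument. First I would dispose of the easy cases. By Lemma~\ref{lem:gap-trivial} the bound $\area(R(S))\geq\frac12$ already holds for every trivial instance, so it suffices to treat nontrivial instances. By Lemma~\ref{lem:genpos} it further suffices to treat point sets $S\subset{\rm int}(U)$ in general position, i.e.\ with distinct $x$- and $y$-coordinates, since the area of the reach is lower semicontinuous under the perturbation used there. For such $S$ the full gap classification of Lemma~\ref{lem:classify} is available, and every gap is one of the five listed types.

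Next I would express $U\setminus R(S)$ as the disjoint union of its gaps and then refine this decomposition into the family $\mathcal{C}^*$: keep each corner gap and each rectangular side gap whole, and split every L-shaped side gap into its two constituent rectangles as in Fig.~\ref{fig:gapdiv}. The members of $\mathcal{C}^*$ cover $U\setminus R(S)$ and have pairwise disjoint interiors, so they partition $U\setminus R(S)$ up to a set of measure zero; hence $\area(U\setminus R(S))=\sum_{C\in\mathcal{C}^*}\area(C)$.

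The heart of the argument is then the chain
\[
\area(U\setminus R(S))=\sum_{C\in\mathcal{C}^*}\area(C)\ \leq\ \sum_{C\in\mathcal{C}^*}\area(R_C)\ =\ \area\Bigl(\bigcup_{C\in\mathcal{C}^*}R_C\Bigr)\ \leq\ \area(R(S)).
\]
Here the first inequality is property~\ref{p:3} of Lemma~\ref{lem:properties} applied termwise; the middle equality is the interior-disjointness of the charged regions established in Lemma~\ref{lem:disjoint}; and the final inequality uses property~\ref{p:1} of Lemma~\ref{lem:properties}, namely $R_C\subseteq R(S)$ for every $C$, together with the fact that a union of subsets of $R(S)$ has area at most $\area(R(S))$. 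Combining this with $\area(U)=\area(R(S))+\area(U\setminus R(S))=1$ yields $1\leq 2\,\area(R(S))$, that is, $\area(R(S))\geq\frac12$. For optimality I would simply point back to the extremal construction from the introduction: for $S=\{(\tfrac12,0)\}$ one has $\area(R(S))=\frac12$ (Fig.~\ref{fig:upperBound}(b)), and the $\eps$-perturbed clusters described there show the bound is unimprovable for every fixed $n$.

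As for the main obstacle: at the level of the theorem itself there is essentially none, because the genuine difficulty has already been absorbed into Lemma~\ref{lem:disjoint} (the sentinel-triangle case analysis) and Lemma~\ref{lem:properties} (the triangle area estimates). The only points demanding care are bookkeeping: verifying that $\mathcal{C}^*$ really does partition $U\setminus R(S)$ up to measure zero, so that boundary overlaps contribute nothing to the sum, and checking that the two inequalities and the disjointness equality in the displayed chain compose in the correct direction.
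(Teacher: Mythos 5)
Your proposal is correct and follows essentially the same route as the paper's own proof: reduce to nontrivial, general-position instances via Lemmas~\ref{lem:gap-trivial} and~\ref{lem:genpos}, decompose $U\setminus R(S)$ into the family $\mathcal{C}^*$, and combine property~\ref{p:3} of Lemma~\ref{lem:properties} with the interior-disjointness from Lemma~\ref{lem:disjoint} to get $\area(U\setminus R(S))\leq\area(R(S))$, with the singleton $S=\{(\tfrac12,0)\}$ showing tightness. The only difference is presentational: you make explicit a few bookkeeping points (the measure-zero overlaps and the trivial case) that the paper leaves implicit.
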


\begin{proof}
By Lemma~\ref{lem:genpos}, it suffices to prove the lower bound when $S\subset {\rm int}(U)$ and no two points in $S$ have the same $x$- or $y$-coordinate. For all gaps $C\subset U\setminus R(S))$, we have defined interior-disjoint regions $R_C \subset R(S)$ such that $\area(C)\leq \area(R_C)$. Consequently, $\sum_C \area(C)\leq \sum_C\area(R_C)\leq \area(R(S))$, which immediately yields $\area(R(S))=1-\sum_C\area(C)\geq 1-\area(R(S))$, and $\area(R(S))\geq 1/2$, as claimed. This bound is the best possible: the point set $S = \{(\frac12,0)\}$ attains $\area(R(S)) = \frac12$.
\end{proof}

\section{Algorithm for Computing the Reach}
\label{sec:algorithm}

In this section, we show how to compute efficiently the reach of a given point set.

\begin{theorem}\label{thm:algorithm}
For a set $S\subset U$ of $n$ points, $R(S)$ can be computed in $O(n\log n)$ time.
\end{theorem}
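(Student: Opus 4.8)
The plan is to compute the reach directly from its defining formula $R(S)=\bigcup_{i=1}^n\bigcup_{j=1}^4 q_i^j$: first construct the $4n$ maximal empty anchored squares $q_i^j$, and then form their union as an orthogonal region. Since the four corner types are related by the reflections in the dihedral group $D_4$, it suffices to give an $O(n\log n)$ procedure for a single type, say the lower-left squares $q_i^1$, and then run it on the three reflected copies of $S$.

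\emph{Computing the squares.} For each $s_i=(x_i,y_i)$, the side length of $q_i^1$ is $a_i=\min\bigl(1-x_i,\,1-y_i,\,m_i\bigr)$, where $m_i=\min_{j:\,x_j>x_i,\ y_j>y_i}\max(x_j-x_i,\,y_j-y_i)$ is the $L_\infty$-distance from $s_i$ to its nearest \emph{dominating} neighbor (a point lying to the upper right). The two boundary terms are trivial, so the work is to compute all the $m_i$. I would do this by a plane sweep: sort $S$ by $x$-coordinate and sweep from right to left, maintaining a balanced search tree over the $y$-coordinates of the points already inserted (each of which dominates the current point in its $x$-coordinate). When $s_i$ is processed, I query among the inserted points with $y_j>y_i$ for the one minimizing $\max(x_j-x_i,\,y_j-y_i)$. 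The only subtlety is the $\max$ objective; a $45^\circ$ rotation turns $L_\infty$ into a scaling of the $L_1$ metric, after which the dominance constraint and the distance minimization become a bounded number of one-sided range-minimum queries, each answerable in $O(\log n)$ time with a suitably augmented tree. Hence all $m_i$, and all four families of squares, are obtained in $O(n\log n)$ total.

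\emph{Computing the union.} The $4n$ squares are positive homothets of $[0,1]^2$, so the boundaries of any two of them cross at most twice; the family is therefore a family of pseudodisks, and by the classical union-complexity bound for pseudodisks (Kedem, Livne, Pach, and Sharir) the boundary of their union has only $O(n)$ edges and vertices. I would extract this boundary with a left-to-right plane sweep over the $O(n)$ vertical square sides, maintaining the set of covered $y$-intervals along the sweep line in a balanced interval structure and emitting boundary edges precisely where the covered set changes. Because the output has linear size, this sweep runs in $O(n\log n)$ time, and the resulting orthogonal polygon, which is simply connected by Corollary~\ref{cor:connected1}, is exactly $R(S)$.

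\emph{Main obstacle.} The two delicate points are (i) realizing the nearest-dominating-neighbor query so that the $\max(\cdot,\cdot)$ objective is resolved in $O(\log n)$ per point rather than by an ad hoc linear scan, and (ii) converting the $O(n)$ union-complexity bound into an $O(n\log n)$ contour-tracing sweep. The latter is where care is most needed: the argument that the sweep encounters only $O(n)$ boundary features, and touches each a bounded number of times, rests squarely on the pseudodisk property that one square boundary is crossed at most twice by another. I expect establishing and cleanly invoking this linear union complexity to be the crux of the proof.
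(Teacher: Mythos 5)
Your proposal is correct and follows essentially the same route as the paper: compute the $4n$ maximal anchored squares by reducing each to a nearest-dominating-neighbor query under the $\max(\cdot,\cdot)$ objective, split along the diagonal into two wedge queries answered by a plane sweep with an augmented search structure, and then form the union by a sweep, using the Kedem--Livne--Pach--Sharir pseudodisk bound to guarantee $O(n)$ output complexity. The paper's only concrete difference is that it sweeps by decreasing $y$-coordinate and invokes an off-the-shelf dynamic \textsc{MinBelow} data structure of Narasimhan and Smid for the line $y=x$, which packages exactly the one-sided wedge range-minimum query you sketch (and sidesteps the fact that under a pure $x$-sweep one of your two wedge queries is still two-sided).
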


Recall that, for a set $S=\{s_1,\ldots , s_n\}$, the reach is defined as a union of $4n$ squares, $R(S)=\bigcup_{i=1}^n \bigcup_{j=1}^4 q_i^j$, where $q_i^1$ is the maximal axis-aligned empty square in $U$ whose lower-left corner is $s_i$, and $q_i^2$, $q_i^3$, and $q_i^4$ are defined similarly where $s_i$ is the upper-left, upper-right, and lower-right corner, respectively. Since any two squares cross in at most two points, the $4n$ squares $q_i^j$ ($i=1,\ldots, n$ and $j=1,\ldots , 4$) form a pseudo-circle arrangement. It is well known that the union of $O(n)$ pseudo-circles has $O(n)$ vertices~\cite{KLP86}. The union of $4n$ axis-aligned \emph{squares} can be computed by a sweep-line algorithm in $O(n\log n)$ time~\cite{Ben77}.

We note that Bentley's sweep-line algorithm can compute the \emph{area} of the union of $n$ axis-aligned \emph{rectangles} in $O(n\log n)$ time (without computing the union itself, which may have $\Theta(n^2)$ complexity). Computing the volume of the union of axis-aligned hyper-rectangles in $\mathbb{R}^d$ is known as \emph{Klee's measure problem}, and the current best algorithms~\cite{Chan13} for $d\geq 3$ run in $O(n^{d/2})$ time in general, and in $O(n^{(d+1)/3}\, {\rm polylog}(n))$ time for hypercubes (see also~\cite{YS15}).

\begin{figure}[h]
	\centering
%	\def\svgwidth{.2\textwidth}
%	\graphicspath{{./fig/}}
\includegraphics[width=.8\textwidth]{./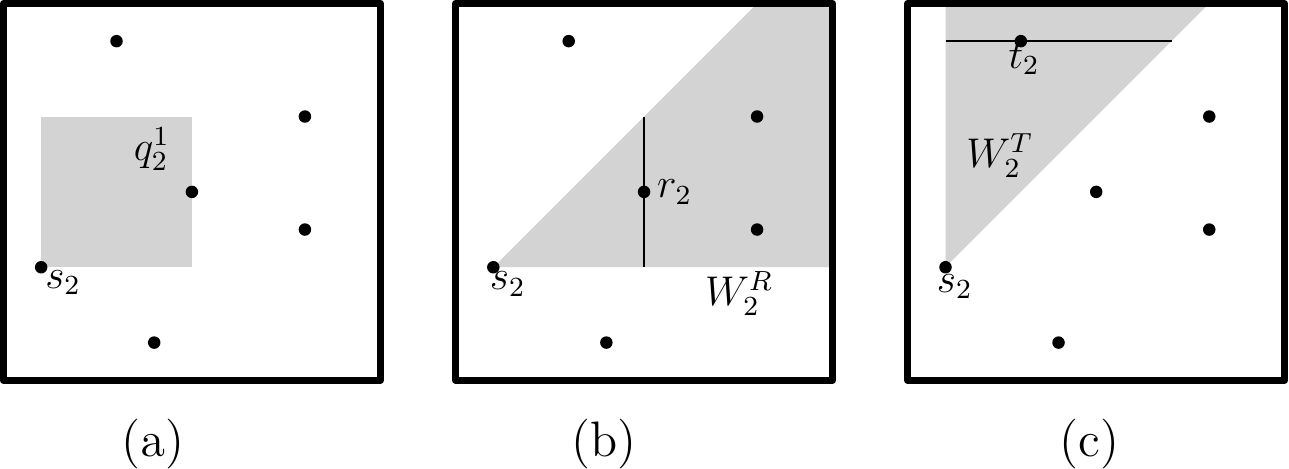}
	\caption{A set $S$ of 6 anchors in $U=[0,1]^2$.
            (a) The maximal empty anchored square $q_2^1$;
            (b) wedge $W_2^R$; and
            (c) wedge $W_2^T$.}
	\label{fig:wedges}
\end{figure}

It remains to compute the $4n$ anchored maximal empty squares $q_i^j$. We focus on the $n$ \emph{lower-left} anchored squares  $q_i^1$ ($i=1,\ldots, n$), the other three types can be computed analogously. For every $i=1,\ldots, n$, the lower-left corner of $q_i^1$ is $s_i$, and its left \emph{or} top side contains another anchor or a point in $\partial U$; we say that this point is the \emph{blocker} of $q_i^1$.
For each $i$, we find a first point that may block the square $q_i^1$ on the left and on the top side, independently. The blocker of $q_i^1$ is the points closest to $s_i$ in $L_\infty$ norm. We continue with the details. We define two wedges with apex at the origin: Let $W^L=\{(x,y)\in \mathbb{R}^2: 0<y<x\}$ and $W^T=\{(x,y)\in \mathbb{R}^2: 0<x<y\}$; see Fig.~\ref{fig:wedges}(b--c). The Minkowski sums $W_i^L:=s_i+W_L$ and $W_i^T:=s_i+W^T$ are the translates of these wedges with apex at $s_i$. Let $r_i$ be a point of minimum $x$-coordinate in $W_i^L\cap (S\cup \partial U)$; and let $t_i$ be a point of minimum $y$-coordinate in $W_i^T\cap (S\cup \partial U)$. Then the blocker of $q_i^1$ is either $r_i$ or $t_i$, whichever is closer to $s_i$ in $L_\infty$ norm.

For every $i=1,\ldots, n$, we find points $r_i$ and $t_i$, independently. Consider the points $r_i\in W_i^R$, for $i=1,\ldots, n$ (the case of the points $t_i\in W_t^T$ is analogous). We use a data structure originally developed for computing $\Theta$-graphs in the context of geometric spanners by Narasimhan and Smid~\cite[Section~4.1.2]{NS07}. They developed the following dynamic data structure for $n$ points in the plane:
\begin{lemma}\label{lem:ns} \cite[Lemma~4.1.9]{NS07}.
Let $H$ be a nonvertical line through the origin. There is a data structure that maintains a set $P$ of $n$ points in the plane and supports
the following queries: (i) \textsc{MinBelow}$(p)$: Given a query point $p\in P$, compute a point with the minimum $x$-coordinate among all points in $P$ that are below $p+H$; (ii) insert a point into $P$; (iii) delete a point from $P$. The data structure has $O(n)$ space, $O(n\log n)$ preprocessing time, and $O(\log n)$ query time.
\end{lemma}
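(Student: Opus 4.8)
The plan is to reduce the geometric \textsc{MinBelow} query to a one-dimensional dynamic prefix-minimum query, and then realize it with an augmented balanced binary search tree. Let $m$ denote the slope of $H$. For a point $q$ with coordinates $x(q),y(q)$, define its \emph{key} $\kappa(q)=y(q)-m\,x(q)$, which is the $y$-intercept of the line of slope $m$ through $q$. A point $q$ lies strictly below the translate $p+H$ exactly when $\kappa(q)<\kappa(p)$, since both inequalities are equivalent to $y(q)-m\,x(q)<y(p)-m\,x(p)$. Hence \textsc{MinBelow}$(p)$ asks for the point of minimum $x$-coordinate among all points of $P$ whose key is strictly smaller than $\kappa(p)$ — a prefix query once the points are ordered by key. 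In particular the query point $p$ itself is excluded, since $\kappa(p)<\kappa(p)$ fails.

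First I would store $P$ in a balanced binary search tree $\mathcal{T}$ (say a red--black or weight-balanced tree) keyed on $\kappa$, breaking ties by $x$-coordinate, and augment every node $v$ with a value $\mu(v)$ equal to the minimum $x$-coordinate over all points in the subtree rooted at $v$. This augmentation obeys the local recurrence $\mu(v)=\min\{x(v),\mu(\mathrm{left}(v)),\mu(\mathrm{right}(v))\}$, so it can be recomputed in $O(1)$ time from the two children whenever the subtree below $v$ is modified.

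To answer \textsc{MinBelow}$(p)$, I would descend along the search path for the threshold $\kappa(p)$. At each node whose key is smaller than $\kappa(p)$, both that node and its entire left subtree have keys below the threshold, contributing $\min\{x(\cdot),\mu(\mathrm{left}(\cdot))\}$ to the answer, and the search proceeds to the right child; otherwise it proceeds to the left child. This decomposes the set $\{q:\kappa(q)<\kappa(p)\}$ into $O(\log n)$ canonical pieces (single nodes together with whole subtrees), and taking the minimum of their recorded $x$-values returns the answer in $O(\log n)$ time. Insertions and deletions are ordinary search-tree updates, after which $\mu$ is restored bottom-up along the affected path; because each rebalancing rotation changes only a constant number of parent--child links, the augmentation is repaired in $O(1)$ per rotation, keeping each update at $O(\log n)$. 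The tree uses $O(n)$ space, and presorting the points by $\kappa$ lets me build $\mathcal{T}$ together with all $\mu$-values in $O(n\log n)$ preprocessing time.

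I expect the only delicate point to be the strict-inequality semantics of ``below'': points lying exactly on $p+H$ (equal key) and the query point itself must not be counted. I would handle this by querying the open key-interval $(-\infty,\kappa(p))$, treating equal-key nodes as not-below during the descent; the secondary tie-break by $x$-coordinate guarantees a well-defined search path and a clean separation of equal-key points. Everything else is the standard machinery of augmented balanced search trees, so I anticipate no deeper obstacle.
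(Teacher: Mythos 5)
Your proof is correct. Note that the paper does not prove this lemma at all---it is quoted verbatim from Narasimhan and Smid~\cite[Lemma~4.1.9]{NS07}---and your argument (reduce ``below $p+H$'' to a prefix query on the key $\kappa(q)=y(q)-m\,x(q)$, then answer prefix-minimum-of-$x$ queries with a balanced search tree augmented by subtree minima) is exactly the standard construction used in that reference, so there is nothing further to compare.
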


\begin{corollary}\label{cor:ns}
Given a  point set $S=\{s_i: i=1\ldots , n\}\subset U$, the points $r_i$ and $t_i$ ($i=1,\ldots , n$) can be computed in $O(n\log n)$ time.
Consequently, the squares $q_i^1$ can also be computed in $O(n\log n)$ time.
\end{corollary}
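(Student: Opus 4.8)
The plan is to reduce the computation of all the points $r_i$ (and, symmetrically, all the $t_i$) to $O(n)$ queries against the dynamic data structure of Lemma~\ref{lem:ns}, driven by a single plane sweep. The starting observation is that membership of an anchor $s_j$ in the open wedge $W_i^L = s_i + W^L = \{(x,y): 0 < y - y(s_i) < x - x(s_i)\}$ is the conjunction of two linear conditions: (a) $y(s_j) > y(s_i)$, i.e.\ $s_j$ lies strictly above the horizontal line through $s_i$; and (b) $x(s_j) - y(s_j) > x(s_i) - y(s_i)$, i.e.\ $s_j$ lies strictly below the slope-$1$ line through $s_i$. Since $r_i$ is the point of minimum $x$-coordinate in $W_i^L \cap S$, and this is exactly the kind of ``minimum $x$-coordinate below a line of fixed slope'' query that \textsc{MinBelow} answers, the idea is to let the sweep enforce condition (a) by controlling which anchors are present in the structure, and let a single \textsc{MinBelow} query with $H$ of slope $1$ enforce condition (b).

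Concretely, I would instantiate the structure with $H$ equal to the line $y = x$ through the origin, so that for a query point $p$ the set of points ``below $p + H$'' is precisely $\{s_j : x(s_j) - y(s_j) > x(p) - y(p)\}$, matching condition (b). I would then process the anchors in order of decreasing $y$-coordinate: when $s_i$ is reached, I insert it and immediately call \textsc{MinBelow}$(s_i)$. At that moment the structure contains exactly the anchors with $y \ge y(s_i)$, so the returned minimum-$x$ point is the point of smallest $x$-coordinate among anchors satisfying both (a) and (b) --- that is, $r_i$. (The anchor $s_i$ itself lies on $s_i + H$ rather than below it, so it is never returned as its own blocker.) This performs $O(n)$ insertions and $O(n)$ queries, each in $O(\log n)$ time after $O(n\log n)$ preprocessing, for $O(n\log n)$ total. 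The computation of the $t_i$ is identical after exchanging the roles of the $x$- and $y$-axes (reflecting the instance in the line $y = x$, so that ``minimum $x$-coordinate'' becomes ``minimum $y$-coordinate'' and $W_i^T$ plays the role of $W_i^L$).

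Two routine points remain. First, $\partial U$ must be folded into the candidate set $S \cup \partial U$; rather than inserting boundary objects, I would account for it afterwards, since the right and top sides of $U$ simply cap the side length of $q_i^1$ at $1 - x(s_i)$ and $1 - y(s_i)$, respectively. Thus the side length of $q_i^1$ is $\min\{\,1 - x(s_i),\ 1 - y(s_i),\ x(r_i) - x(s_i),\ y(t_i) - y(s_i)\,\}$, computable in $O(1)$ per anchor once $r_i$ and $t_i$ are known, which produces all squares $q_i^1$ within the same time bound. Second, coincident $x$- or $y$-coordinates must not corrupt the sweep: two anchors sharing a $y$-coordinate do not lie in each other's wedges, since condition (a) is strict, so I would break ties consistently (e.g.\ querying all anchors at a given $y$-level before inserting any of them, or applying an infinitesimal symbolic perturbation), which does not affect the asymptotic running time.

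The main obstacle is getting the reduction of the first paragraph exactly right: the wedge is a two-constraint (intersection-of-halfplanes) region, whereas a single \textsc{MinBelow} call resolves only one halfplane, so correctness hinges on the sweep precisely realizing the second halfplane through its insertion schedule, and on the orientation of $H$ and of the sweep direction matching the strict inequalities that define the open wedge. Once this correspondence is verified, the remaining steps are mechanical.
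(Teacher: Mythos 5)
Your proof is correct and follows essentially the same route as the paper's: sort the anchors by decreasing $y$-coordinate, instantiate the structure of Lemma~\ref{lem:ns} with $H:y=x$, and for each $s_i$ insert it and issue one \textsc{MinBelow} query, so that the insertion schedule enforces the halfplane $y\geq y(s_i)$ while the query enforces the halfplane below the slope-$1$ line. Your treatment of the boundary (capping the side length by $1-x(s_i)$ and $1-y(s_i)$) and of coincident $y$-coordinates is slightly more explicit than the paper's, but the algorithm and the $O(n\log n)$ accounting are the same.
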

\begin{proof}
Assume that $S$ is sorted in decreasing order by their $y$-coordinates. We use the data structure in Lemma~\ref{lem:ns} with the line $H:y=x$ as follows. Initially $P=\emptyset$. For $i=1,\ldots, n$, we insert $s_i$ into $P$. If \textsc{MinBelow}$(s_i)$ returns a point in $P$, then let this be $r_i$, otherwise let $r_i$ be the point in the right side of $U$ that has the same $y$-coordinate as $s_i$. Since $P$ contains all points in $S$ whose $y$-coordinates are greater or equal to that of $s_i$, if wedge $W_i^R$ contains any anchor, then \textsc{MinBelow}$(s_i)$ returns one with the minimum $x$-coordinate. This shows that $r_i$ is computed correctly for all $i=1,\ldots, n$.

The points $t_i$ ($i=1,\ldots , n$) can be computed analogously in $O(n\log n)$ time. In $O(1)$ additional time
for each $i=1,\ldots ,n$, we can compare $r_i$ and $t_i$, find the blocker of $q_i^1$, and determine the maximal anchored square $q_i^1$.
\end{proof}

\begin{proof}[Proof of Theorem~\ref{thm:algorithm}]
By a repeated application of Corollary~\ref{cor:ns}, we can compute all $4n$ anchored squares $q_i^j$ $(i=1,\ldots, n$; $j=1,\ldots ,4$). As noted above, a sweep-line algorithm can compute the union $R(S)=\bigcup_{i=1}^n \bigcup_{j=1}^4 q_i^j$ in $O(n \log n)$ time. This completes the proof.
\end{proof}

\section{NP-Hardness of Maximum-Area Anchored Square Packings}
\label{sec:hardness}

We now prove that the maximum-area anchored square packing problem is NP-complete.
We define the decision version of the problem as follows. Instead of the unit square $[0,1]^2$,
we use the square $U=[0,W]^2$, for some integer $W>0$. For a finite set $S\subset [0,W]^2$ of anchors with integer coordinates, we ask whether there is an anchored square packing of area $W^2$.

We prove NP-hardness by a reduction from \textsc{Planar-Monotone-3SAT} (described below). For every instance of \textsc{Planar-Monotone-3SAT}, we construct an instance $S\subset [0,W]^2$. We say that an anchored empty square is \emph{forced} if every packing of area $W^2$ contains it. An anchor in $S$ is \emph{forced} if it is the anchor of a forced square; otherwise it is \emph{free}. A forced square $A$ and its anchor $s\in S$ form a \emph{forced pair} $(A,s)$. We construct an instance in which most of the anchors are forced, and a small number of anchors encode the truth value of the variables in a 3SAT instance.

To prove that the two instances are equivalent, we shall argue that a set of squares and anchors are forced. In an intermediate step, we assume that $\mathcal{F}=\{(A_i,s_i):i=1\ldots, f\}$ is a set of forced square-anchor pairs, and we would like to show that another square-anchor pair $(A,s)$ is also forced. Let $P=U\setminus \bigcup_{i=1}^f A_i$ be the complement of the forced squares in $\mathcal{F}$. By construction, $P$ is an orthogonal polygon (possibly with holes). An anchor $s$ is \emph{undecided} if there is no forced pair $(A,s)$ anchored at $s$ in $\mathcal{F}$ (i.e., $s$ is either free or its forced pair is not in $\mathcal{F}$).

We show (Lemma~\ref{lem:forced}) that the following two properties each imply that the pair $(A,s)$ is forced (given that all pairs in $\mathcal{F}$ are forced).
We define both properties for the orientations shown in Fig.~\ref{fig:forced},
but they generalize to all other orientations obtained through the symmetry group of $U$.
Let $s\in S$, and let $A$ be a maximal empty square anchored at $s$ such that ${\rm int}(A)\subset P$.
Without loss of generality, assume that $s$ is the upper-right corner of $A$.

\begin{figure}[h]
	\centering
%	\def\svgwidth{.5\textwidth}
%	\graphicspath{{./fig/}}
%	\input{./fig/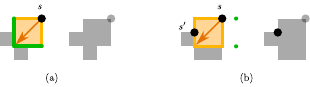_tex}
	\includegraphics[width=0.7\linewidth]{forced.pdf}
	\caption{Identifying forced points. (a) Property~\ref{case1}. (b) Property~\ref{case2}.}
	\label{fig:forced}
\end{figure}

\begin{enumerate}
\item \label{case1}
The lower-left corner of $A$ is a convex vertex of $P$ and there is no undecided anchor in the closure of the bottom and left edges of $A$.
\item \label{case2}
The lower-left corner of $A$ is a convex vertex of $P$, the side length of $A$ is greater than 1, and the bottom edge of $A$ is contained in $\partial P$. There is a unique undecided anchor $s'\in \partial A$ located one unit above the lower-left corner of $A$. There is no undecided anchor one unit to the right of $s$ or to the right of the lower-right corner of $A$.
\end{enumerate}

\begin{lemma}\label{lem:forced}
Given a set of forced pairs $\mathcal{F}=\{(A_i,s_i):i=1\ldots, f\}$,
if a pair $(A,s)$ has properties~\ref{case1} or~\ref{case2}, then $(A,s)$ is a forced pair.
\end{lemma}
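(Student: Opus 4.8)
The plan is to show that in any packing $\mathcal{P}$ of area $W^2$, the square $A$ (or possibly a square at $s$ that effectively forces the same cell of $P$) must appear. The key structural fact I would lean on is that a packing of area exactly $W^2$ must tile $U$: since all anchors have integer coordinates and the squares are empty and interior-disjoint, a total area of $W^2$ means there is no uncovered region, so $\bigcup \mathcal{P}$ covers $U$ up to measure zero. In particular, every point of the free region $P$ (not already covered by the forced squares in $\mathcal{F}$) must be covered by some square of $\mathcal{P}$ anchored at an undecided anchor, and these squares must themselves lie in $P$ because the forced squares $A_i$ are already fixed in every area-$W^2$ packing.

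**For Property~\ref{case1},** I would argue by a cornering/extremal argument at the lower-left corner $v$ of $A$. Because $v$ is a convex vertex of $P$, any square of the packing that covers the region just inside this corner must have $v$ on its boundary, and in fact must be anchored so as to fill the corner. The only anchors available to cover an arbitrarily small neighborhood of $v$ inside $P$ are anchors incident to the bottom or left edges of $A$; by hypothesis none of these is undecided, so the only anchor that can do the job is $s$ itself, placed as the upper-right corner of a square filling the corner at $v$. Maximality of $A$ (with ${\rm int}(A)\subset P$) then forces this square to be exactly $A$: a smaller square would leave an uncovered sliver adjacent to $v$ that no other anchor can reach, contradicting the tiling property. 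Hence $(A,s)$ is forced.

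**For Property~\ref{case2},** the argument is similar but must handle the undecided anchor $s'$ sitting one unit above the lower-left corner of $A$, which could in principle cover part of the corner region with its own square. Here I would use the fact that the side length of $A$ exceeds $1$ together with the position of $s'$: any square anchored at $s'$ that helps cover the corner at $v$ has side length at most $1$ (it is bounded below by the bottom edge $bc\subset\partial P$ and cannot extend past the fixed geometry), so it cannot fill the full corner cell that $A$ occupies. The remaining gap adjacent to $v$, of height more than the reach of $s'$, can again only be covered by a square anchored at $s$ as its upper-right corner, and the hypotheses ruling out undecided anchors one unit to the right of $s$ and of the lower-right corner of $A$ eliminate the only other candidate anchors along the right edge of $A$. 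A case analysis on how $s'$ is used (together with $s$) then shows the covering forces $A$ exactly.

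**The main obstacle** I anticipate is Property~\ref{case2}: unlike the clean corner argument of Property~\ref{case1}, here the undecided anchor $s'$ genuinely interacts with the cell, and one must carefully verify that no combination of squares at $s'$ and at the other (decided) anchors can tile the region occupied by $A$ in a way that avoids using $A$ itself. This requires tracking the side-length constraints imposed by ${\rm int}(A)\subset P$, by the integer coordinates, and by the ``no undecided anchor one unit to the right'' conditions, and checking that the only feasible tiling of the corner cell places $A$ with anchor $s$. The rest of the proof is the bookkeeping to confirm that in each sub-case an uncovered region of positive area would otherwise remain, contradicting $\area(\mathcal{P})=W^2$.
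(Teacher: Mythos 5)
Your overall strategy (a packing of area $W^2$ must cover $U$, so argue at the lower-left corner of $A$ about which anchors could possibly cover it) is the paper's, and your treatment of Property~1 is essentially the paper's proof. The paper works with the unit cell $B$ at the lower-left corner of $A$ rather than an arbitrarily small neighborhood of $v$, using the integrality of the anchors to conclude that every empty square containing $B$ is contained in $A$; but the conclusion --- under Property~1 only $s$ can anchor such a square, and maximality forces that square to be $A$ --- is the same as yours.

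The gap is in Property~2. Once you grant that the undecided anchor $s'$ can anchor a unit square covering the corner cell $B$, you cannot conclude that ``the covering forces $A$ exactly'': in that sub-case the unit square anchored at $s'$ overlaps $A$, so $A$ cannot appear in the packing at all, and no case analysis on ``how $s'$ is used together with $s$'' will put it back. The contradiction in this sub-case has to come from somewhere else, and your sketch does not identify it. The paper's step is: after the unit square $B$ anchored at $s'$ is placed, look at the unit cell $B'$ immediately to its right. Its lower-left corner is now a convex vertex of $P\setminus B$, its bottom edge lies in $\partial P$, its left edge lies in the interior of $A$ and so carries no anchors, and the hypotheses that there is no undecided anchor one unit to the right of $s$ or to the right of the lower-right corner of $A$ eliminate every candidate upper-right corner of a square covering $B'$. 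Thus $B'$ is in the situation of Property~1 with no anchor available, so it cannot be covered, contradicting the area being $W^2$. Your proposal instead spends those ``to the right'' hypotheses on eliminating anchors for ``the remaining gap adjacent to $v$,'' which is not where they are needed; without this second application of the Property-1 argument to $B'$, the Property~2 case does not close.
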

\begin{proof}
Suppose, to the contrary, that there is a anchored square packing $Q$ of area $W^2$ that does not use the square $A$ anchored at $s$.
Let $B\subseteq A$ be the unit square incident to the lower-left corner of $A$. Since all anchors have integer coordinates, every empty square containing $B$ is also contained in $A$.
If a pair $(A,s)$ has property~\ref{case1}, apart from $A$, no such empty square has a corner at an undecided anchor and, hence, $B$ cannot be covered.
If a pair $(A,s)$ has property~\ref{case2}, $B$ must be covered by a square anchored at its upper-left corner, which is undecided by hypothesis.
Hence, $B\in Q$.
Let $B'$ be a unit square with integer coordinates to the right of $B$.
Then the maximal empty square in $P\setminus B$ containing $B'$ satisfies property~\ref{case1},
but that there is no point at its upper-right corner.
In this case, $B'$ is not covered.
\end{proof}

\begin{theorem}\label{thm:hardness}
It is NP-hard to compute the maximum area anchored square packing of a given set $S$
of $n$ anchors with integer coordinates in a square $U=[0,W]^2$.
\end{theorem}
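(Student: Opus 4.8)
Theorem~\ref{thm:hardness} asserts NP-hardness of the maximum-area anchored square packing problem. The plan is to reduce from \textsc{Planar-Monotone-3SAT}, which is known to be NP-complete. Given a planar monotone 3SAT formula $\varphi$, I would construct in polynomial time a point set $S\subset[0,W]^2$ (with $W$ polynomial in the size of $\varphi$) together with the claim that $\varphi$ is satisfiable if and only if $S$ admits an anchored square packing of total area exactly $W^2$ (i.e.\ a packing that tiles $U$). The key engineering tool is Lemma~\ref{lem:forced}: by inductively applying Properties~\ref{case1} and~\ref{case2}, I can pin down a large ``skeleton'' of forced squares so that the only freedom remaining in any area-$W^2$ packing lies in a handful of designated \emph{free} anchors whose orientation choices encode truth assignments.

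\textbf{Gadget construction.} The heart of the proof is to design the standard suite of planar gadgets out of forced and free anchors. First I would build \emph{wires}: corridors of forced unit-ish squares that leave exactly one free anchor per segment, whose two admissible square orientations propagate a Boolean signal down the corridor without loss of area. The monotone rectilinear embedding of \textsc{Planar-Monotone-3SAT} places all positive clauses above the variable row and all negative clauses below, so I would route wires upward and downward accordingly. Next I need a \emph{variable gadget} that forces a single global choice (true/false) and fans it out to all incident clause wires consistently, a \emph{clause gadget} (an OR of three literals) in which the region tiles to area $W^2$ precisely when at least one incoming wire carries the satisfying value, and \emph{turn} and \emph{split} gadgets to realize arbitrary planar routing on an integer grid. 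In every gadget the invariant must be maintained that \emph{any} area-$W^2$ packing is consistent with a legal signal propagation; the forward direction (satisfying assignment $\Rightarrow$ full tiling) is then a direct construction, while the reverse direction (full tiling $\Rightarrow$ satisfying assignment) follows because Lemma~\ref{lem:forced} forces everything except the signal-carrying anchors, and the clause gadget can be completed to area $W^2$ only when fed a true literal.

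\textbf{The main obstacle.} The crux — and the genuinely novel difficulty flagged in the introduction — is that the square sizes are \emph{unrestricted}: an adversarial packing could place a single very large empty square spanning several gadgets and thereby both capture area and destroy the locality on which the gadget analysis depends. Controlling this is exactly what Lemma~\ref{lem:forced} is for. The careful side-length and adjacency hypotheses in Properties~\ref{case1} and~\ref{case2} (convex vertex of $P$, bottom edge on $\partial P$, the unit square $B$ argument, and the placement of undecided anchors at unit offsets) are engineered so that no oversized square can intrude: the unit cell $B$ at a forced corner can only be covered by a square contained in $A$, which inductively forces $A$ itself. I would therefore spend the bulk of the proof verifying, gadget by gadget, that the forcing cascade reaches every non-signal anchor, so that the residual degrees of freedom are exactly the $O(1)$-many per-gadget choices encoding truth values. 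Once this forcing is established, the equivalence between satisfiability and an area-$W^2$ packing reduces to a finite case check per gadget, and membership in NP is immediate since a packing is a polynomial-size certificate whose area is checkable in polynomial time.
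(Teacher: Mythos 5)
Your overall strategy coincides with the paper's: a reduction from \textsc{Planar-Monotone-3SAT} using its monotone rectilinear embedding, a tiling of $U$ by wire, turn, split, clause, and filler gadgets, and repeated application of Lemma~\ref{lem:forced} to pin down everything except a few signal-carrying anchors. You have also correctly identified the central difficulty (unrestricted square sizes threatening locality) and the role of the unit-cell argument in Lemma~\ref{lem:forced} in containing it.

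The gap is that you never actually exhibit the gadgets, and for this particular theorem the gadgets \emph{are} the proof. You write that you ``would build'' wires, variable, clause, turn, and split gadgets with the required forcing properties, but the existence of such gadgets is precisely the nontrivial claim: since this is the first hardness proof for a packing problem with unrestricted object sizes, there is no standard suite to import, and it is not evident a priori that one can place integer anchors so that (a) the forcing cascade of Lemma~\ref{lem:forced} propagates through every tile, (b) exactly two symmetric completions survive in each wire (encoding a bit), (c) a split duplicates the signal in the correct direction (in the paper, positivity must propagate \emph{backward} from the split's outputs to its input, which is what makes the monotone clause argument work), and (d) the clause's central region is coverable iff at least one incident wire is positive. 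The paper resolves all of this with explicit $12\times 12$ tiles, explicit anchor placements, the packing shown for the negative-split case, and an induction over tiles in lexicographic order of their lower-left corners establishing its Property~(i); none of that is reconstructible from your outline. As written, your argument reduces the theorem to the assertion ``suitable gadgets exist,'' which is the statement to be proved. (Two minor additional points: the theorem claims NP-hardness, so your closing remark about NP membership is not needed for the statement at hand; and your ``variable gadget forcing a single global choice'' is realized in the paper not as a separate gadget but as a wire gadget placed on the variable edge, with the truth value read off from which of two coincident-looking star anchors is consumed.)
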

\begin{proof}
We reduce from \textsc{Planar-Monotone-3SAT} which is NP-complete~\cite{BK12}.
An instance of such problem consists of a boolean formula $\Phi$ in 3CNF with $n$ variables $\{x_1,\ldots,x_n\}$ and $m$ clauses,
and a planar rectilinear drawing of the a bipartite graph of $\Phi$.
The drawing given by an \textsc{Planar-Monotone-3SAT} instance represents variables and clauses by rectangles,
and edges by vertical line segments. It has the additional property that the rectangles of variables
(and only variables) intersect the line $y=0$ and the rectangles of clauses lies in the upper (resp., lower) half-plane contain only positive (resp., negative) literals.
%
%We can assume that all clauses contain exactly 3 literals by replacing a clause with two literals with its disjunction with one of its literals (which would result in a multi-graph).
%
A literal is called \emph{negative} if it is the negation of a variable, and \emph{positive} otherwise.
We need to decide whether we can satisfy all $m$ clauses, each of which is a disjunction of three literals.
	
	\begin{figure}[h]
		\centering
%		\def\svgwidth{\textwidth}
%		\graphicspath{{./fig/}}
		\includegraphics[width=\textwidth]{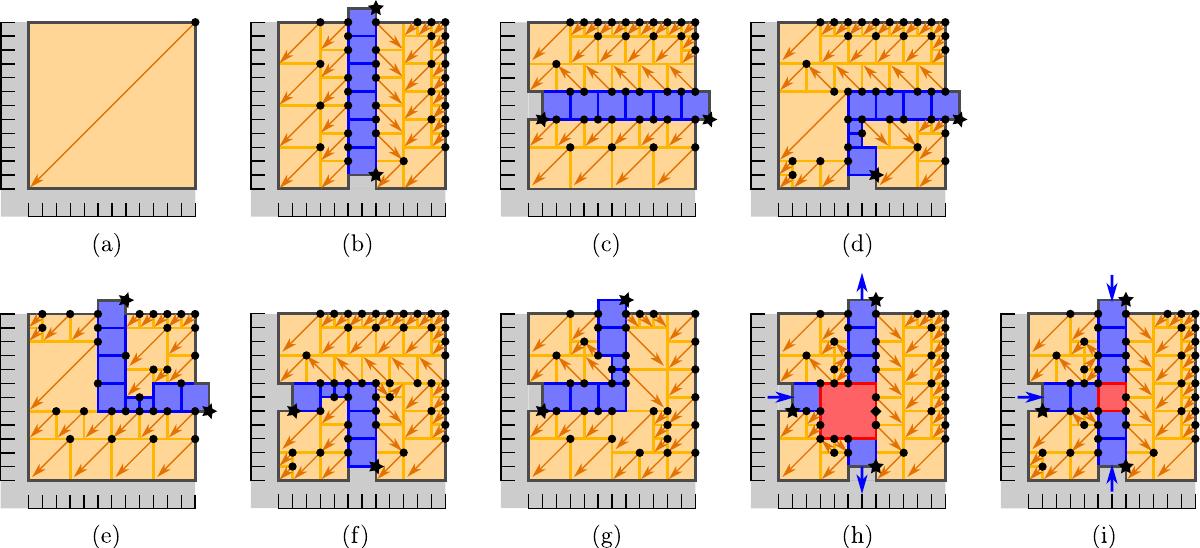}
		\caption{Gadgets.
    %Repeat from the text:
     (a) is a \emph{filler} gadget, (b--c) are \emph{wire} gadgets, (d--g) are \emph{turn} gadgets, (h) is a split gadget, and (i) is a \emph{clause} gadget.
     The rhombus in (h) represents 2 anchors placed at the same position.}
		\label{fig:gadgets}
	\end{figure}

For a given instance of \textsc{Planar-Monotone-3SAT}, we construct an instance $S\subset [0,W]^2$ of the maximal area anchored square packing problem, and then show that the two instances are equivalent.
We first modify the rectilinear graph of the \textsc{Planar-Monotone-3SAT} instance in the following way.
Replace each rectangle by a cycle along its boundary and denote by $G$ the resulting geometric graph.
Delete the left, right, and the top (resp., bottom) edges of the rectangles representing positive (resp., negative) clauses. Each clause is now represented by a horizontal segment (a path of length 2 in $G$).
We designate the middle vertex of this path, which has degree 3, as a \emph{clause vertex}.
% or an arbitrarily chosen endpoint of the line segment in case the degree of the relative clause is 2.
For each cycle in $G$ that represents a variable, delete the right vertical edges,
and designate the left vertical edge as a \emph{variable edge}.
All remaining edges in $G$ called \emph{wires} and all remaining vertices of degree 3 are called \emph{split vertices}. We orient the wires such that they form directed paths from the variable edges to clause vertices.
Assume that the feature size of the resulting rectilinear graph is 1 and the side length of a minimum enclosing axis-aligned square is $k$. We set $W=48k+48$ and let $U=[0,W]^2$.	
Scale up the drawing by a factor of $48$ and place it in $U$ so that
every vertex is at distance at least 24 from $\partial U$.

We tile $U$ with orthogonal polygons. Every tile is congruent to one of the tiles shown in Fig.~\ref{fig:gadgets}. We call these tiles \emph{gadgets}:
(a) is a \emph{filler} gadget, (b--c) are \emph{wire} gadgets, (d--g) are \emph{turn} gadgets, (h) is a split gadget, and (i) is a \emph{clause} gadget. The filer gadget is a $12\times 12$ square,
all other tiles are constructed from a $12\times 12$ square by possibly adding or deleting
$1\times 2$ rectangular features in two or three side of the squares. In a tiling of $[0,W]^2$,
each such feature matches a feature of an adjacent tile.
Choose a tile for each variable that contains part of the variable edge and add the anchors shown in Fig.~\ref{fig:gadgets}(b) (only the star contained in the $12\times 12$ square is added).
Do the same for split and clauses using the tiles shown in Fig.~\ref{fig:gadgets}(h) and (i), respectively.
Connect the gadgets as they are connected in the original drawing using wires and turns.
The directions of the wires attached to split and clause gadgets are indicated by blue arrows in Fig.~\ref{fig:gadgets}.
% This is possible without crossings, by construction.
For all remaining tiles, we use filler gadgets with one anchor.
This completes the description of the instance $S\subset [0,W]^2$.
	
	\begin{figure}[h]
		\centering
%		\def\svgwidth{.2\textwidth}
%		\graphicspath{{./fig/}}
		\includegraphics[width=.2\textwidth]{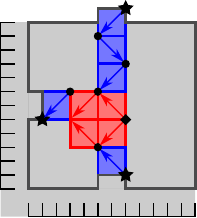}
		\caption{Square packing for the split gadget connected to negative wires.}
		\label{fig:false-split}
	\end{figure}
	
%In Appendix~\ref{app:3}, we complete the proof of Theorem~\ref{thm:hardness} by showing that an \textsc{Planar-Monotone-3SAT} instance admits a positive solution if and only if the corresponding point set $S\subset [0,W]^2$ admits an anchored square packing of area $W^2$.
We now prove that a \textsc{Planar-Monotone-3SAT} instance admits a positive solution if and only if the corresponding point set $S\subset [0,W]^2$ admits an anchored square packing of area $W^2$.

Assume that the \textsc{Planar-Monotone-3SAT} instance admits a positive solution. We show that the corresponding instance $S\subset [0,W]^2$ admits an anchored square packing of area $W^2$.
Choose every orange square in each gadget assigning its anchor as the origin of the arrow contained in it (as shown in Fig.~\ref{fig:gadgets}).
For each of the $n$ wire gadgets placed on variable edges, if the corresponding variable is assigned \texttt{true} (resp., \texttt{false}) add all blue squares assigning its anchor as the only point on its bottom (resp., top) edge.
Every connected component formed by blue squares represents a path of wires.
We say that a wire is \emph{positive} if it is in the upper half of $U$ and its corresponding variable is assigned \texttt{true}, or if it is in the lower half of $U$ and its corresponding variable is assigned \texttt{false}. A wire is \emph{negative} otherwise. If a wire is positive (resp., negative), assign the corner that is behind (resp., ahead of) the blue square as its anchor, considering the direction of the wire.
For split gadgets connected to positive wires assign the top-left corner of the red square as its anchor.
For negative wires, the red square is reached by four equal squares as shown in Fig.~\ref{fig:false-split}.
Since there exists at least one positive wire connected to a clause vertex, there will be at least one point in a corner of the red square in the clause gadget that has not yet been assigned a square. We complete the square packing by adding such a square with a corresponding anchor. Since the anchored squares cover all gadgets, the overall area of the square packing is $W^2$.

Assume that the anchored square packing instance $S\subset [0,W]^2$ admits a positive solution (of area $W^2$).
Recall that $[0,W]^2$ is tiled with gadgets. Sort them in lexicographic order by the coordinates of their lower-left corners (i.e., the first gadget is incident to the origin). We use Lemma~\ref{lem:forced} to prove the following property for each gadget:
\begin{quote}
	\textbf{Property (i).} If the left and bottom boundaries are part of the perimeter of a forced polygon $P$ and contain no free anchor relative to $P$ except for the points shown by a star, then (i.a) every orange and blue square shown in Fig.~\ref{fig:gadgets} in the corresponding gadget is forced; and (i.b) if $P'$ is the union of $P$ and the orange and blue squares inside the gadget, then there is no free anchor relative to $P'$ on the boundary of the gadget except for points shown by a star.
\end{quote}
%Using these properties, we then show that the solution of the square packing can be converted into a solution of the \textsc{Planar-Monotone-3SAT} instance.

\begin{figure}[h]
	\centering
	\includegraphics[width=\textwidth]{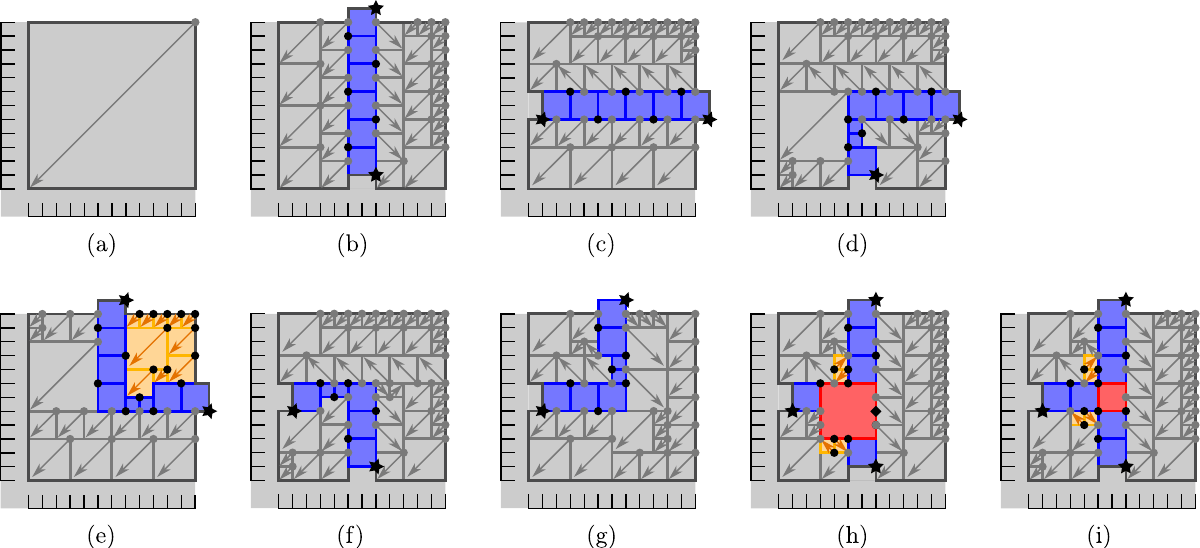}
	\caption{Forced squares and anchors.}
	\label{fig:gadgets-forced}
\end{figure}

Initially, in every gadget, we can determine at least one pair of a forced square and a corresponding forced anchor using Lemma~\ref{lem:forced}. Fig.~\ref{fig:gadgets-forced} shows the result of recursively adding a forced square into the forced polygon, and applying Lemma~\ref{lem:forced} to another pair until there are no more forced pairs in the gadget.
We now show that all blue squares are forced.
In each case, we can take the lower-left blue square and conclude that if the square packing covers it entirely, then it is covered by a square anchored at one of its corners.	After we add this square to the forced polygon $P$, the same argument holds for every lower-left blue square not in $P$. Consequently, all blue squares are forced.
The remaining orange squares are forced by recursively applying Lemma~\ref{lem:forced}.
Then, if property (i) is satisfied, every gadget satisfies (i.a) and (i.b).
Property (i) is trivially satisfied for the lower-left gadget and inductively satisfied by assuming that all gadgets to the left and below satisfy (i.a) and (i.b).

We now show how to convert a square packing of area $W^2$ into a solution of the \textsc{Planar-Monotone-3SAT} instance. Wire gadgets have two points indicates by a star: one that is ahead and one behind using the direction of the wire (recall that the direction points from the variable edge to the clause).
A wire gadget that does not use the star that is ahead in its direction as an anchor for one of the squares contained in it is called \emph{positive}.
A wire gadget is called \emph{negative} otherwise.
For all wire gadgets satisfying (i.a) and (i.b), if a star is not used as an anchor for a square in the gadget, then the other point marked by a star must be used as anchor in this gadget.
This implies that, for a pair of adjacent wire gadgets, if the one ahead in the wire direction is positive, so is the other gadget.
Now assume that one of the outputs of the split gadget satisfying (i.a) and (i.b) is connected to a positive gadget.
A point in the middle of an edge of the red square in Fig.~\ref{fig:gadgets-forced}(h) must be used as an anchor of a blue square.
Then, the only way to cover all the red area is to use a single square anchored at its upper-left corner.
Therefore, the wire connected to the input of the split gadget must also be positive.
Finally, assume that the red square in a clause gadget that satisfies (i.a) and (i.b) (see Fig.~\ref{fig:gadgets-forced}(i)) is covered.
Then, it must be anchored at one of its corners.
If it is anchored at the upper-left (resp., bottom-left, bottom-right) corner, then the star at the top (resp., left, bottom) of the gadget is used as an anchor of a blue square in this gadget.
Therefore, it must be adjacent to a positive wire.
Combining all arguments, we set a variable \texttt{true} if its first wire gadget (that was placed on the variable edge) is positive and \texttt{false} otherwise, and then this assignment will satisfy the boolean formula of the \textsc{Planar-Monotone-3SAT} instance.
\end{proof}

\section{Open Problems}
\label{sec:open}

We have shown that at least half of the area of the unit square $U=[0,1]^2$ can be reached by empty squares anchored at $S$ for any finite set $S\subset U$, and this bound is the best possible. We have also given the first NP-hardness proof for a packing problem over geometric objects of arbitrary sizes. Our results raise several intriguing open problems.
Does our result generalize to higher dimensions, that is, is there a lower bound for the maximal volume covered by empty hypercubes anchored at a finite set of points in $[0,1]^d$ for $d>2$? Axis-aligned squares are balls in $L_\infty$-norm: Over all finite sets $S$ of anchors in a unit-diameter ball $U$ in $L_p$-norm, $p\geq 1$, what is the maximum area of a packing of $L_p$-balls that each contain an anchor?
Is there a polynomial-time algorithm for computing the minimum area \emph{lower-left anchored} square packing for a given set $S$ of $n$ points in the unit square $[0,1]^2$?
Is it NP-hard to compute the maximum area anchored \emph{rectangle} packing of a given set $S\subset [0,1]^2$?
For the last two problems, simple greedy strategies achieve constant-factor approximations~\cite{DT15},
and a QPTAS is available for rectangles and a PTAS for squares~\cite{BDT17}.

%---------------------------- Bibliography -------------------------------
\bibliographystyle{plainurl}
\bibliography{reach}

\begin{thebibliography}{10}

\bibitem{BDT17}
Kevin Balas, Adrian Dumitrescu, and Csaba~D. T{\'{o}}th.
\newblock Anchored rectangle and square packings.
\newblock {\em Discrete Optimization}, 26:131--162, 2017.
\newblock URL: \url{https://doi.org/10.1016/j.disopt.2017.08.003}, \href
  {http://dx.doi.org/10.1016/j.disopt.2017.08.003}
  {\path{doi:10.1016/j.disopt.2017.08.003}}.

\bibitem{BT16}
Kevin Balas and Csaba~D. T{\'{o}}th.
\newblock On the number of anchored rectangle packings for a planar point set.
\newblock {\em Theor. Comput. Sci.}, 654:143--154, 2016.
\newblock URL: \url{https://doi.org/10.1016/j.tcs.2016.03.007}, \href
  {http://dx.doi.org/10.1016/j.tcs.2016.03.007}
  {\path{doi:10.1016/j.tcs.2016.03.007}}.

\bibitem{Ben77}
Jon~L. Bentley.
\newblock Solutions to {K}lee's rectangle problems.
\newblock unpublished manuscript, 1977.

\bibitem{Chan13}
Timothy~M. Chan.
\newblock Klee's measure problem made easy.
\newblock In {\em Proc. 54th Annual {IEEE} Symposium on Foundations of Computer
  Science}, pages 410--419, 2013.
\newblock URL: \url{https://doi.org/10.1109/FOCS.2013.51}, \href
  {http://dx.doi.org/10.1109/FOCS.2013.51} {\path{doi:10.1109/FOCS.2013.51}}.

\bibitem{BK12}
Mark de~Berg and Amirali Khosravi.
\newblock Optimal binary space partitions for segments in the plane.
\newblock {\em Int. J. Comput. Geometry Appl.}, 22(3):187--206, 2012.
\newblock URL:
  \url{http://www.worldscinet.com/doi/abs/10.1142/S0218195912500045}.

\bibitem{DT15}
Adrian Dumitrescu and Csaba~D. T{\'{o}}th.
\newblock Packing anchored rectangles.
\newblock {\em Combinatorica}, 35(1):39--61, 2015.
\newblock URL: \url{https://doi.org/10.1007/s00493-015-3006-1}, \href
  {http://dx.doi.org/10.1007/s00493-015-3006-1}
  {\path{doi:10.1007/s00493-015-3006-1}}.

\bibitem{FW91}
Michael Formann and Frank Wagner.
\newblock A packing problem with applications to lettering of maps.
\newblock In Robert L.~Scot Drysdale, editor, {\em Proc.~7th Annual Symposium
  on Computational Geometry}, pages 281--288. {ACM}, 1991.
\newblock URL: \url{http://doi.acm.org/10.1145/109648.109680}, \href
  {http://dx.doi.org/10.1145/109648.109680} {\path{doi:10.1145/109648.109680}}.

\bibitem{IL03}
Claudia Iturriaga and Anna Lubiw.
\newblock Elastic labels around the perimeter of a map.
\newblock {\em J. Algorithms}, 47(1):14--39, 2003.
\newblock URL: \url{https://doi.org/10.1016/S0196-6774(03)00004-X}, \href
  {http://dx.doi.org/10.1016/S0196-6774(03)00004-X}
  {\path{doi:10.1016/S0196-6774(03)00004-X}}.

\bibitem{JC04}
Joo{-}Won Jung and Kyung{-}Yong Chwa.
\newblock Labeling points with given rectangles.
\newblock {\em Inf. Process. Lett.}, 89(3):115--121, 2004.
\newblock URL: \url{https://doi.org/10.1016/j.ipl.2003.09.017}, \href
  {http://dx.doi.org/10.1016/j.ipl.2003.09.017}
  {\path{doi:10.1016/j.ipl.2003.09.017}}.

\bibitem{KT13}
Konstantinos~G. Kakoulis and Ioannis~G. Tollis.
\newblock Labeling algorithms.
\newblock In Roberto Tamassia, editor, {\em Handbook on Graph Drawing and
  Visualization.}, pages 489--515. Chapman and Hall/CRC, 2013.
\newblock URL:
  \url{https://www.crcpress.com/Handbook-of-Graph-Drawing-and-Visualization/Tamassia/9781584884125}.

\bibitem{KLP86}
Klara Kedem, Ron Livne, J{\'{a}}nos Pach, and Micha Sharir.
\newblock On the union of {J}ordan regions and collision-free translational
  motion amidst polygonal obstacles.
\newblock {\em Discrete {\&} Computational Geometry}, 1:59--70, 1986.
\newblock URL: \url{https://doi.org/10.1007/BF02187683}, \href
  {http://dx.doi.org/10.1007/BF02187683} {\path{doi:10.1007/BF02187683}}.

\bibitem{KR92}
Donald~E. Knuth and Arvind Raghunathan.
\newblock The problem of compatible representatives.
\newblock {\em {SIAM} J. Discrete Math.}, 5(3):422--427, 1992.
\newblock URL: \url{https://doi.org/10.1137/0405033}, \href
  {http://dx.doi.org/10.1137/0405033} {\path{doi:10.1137/0405033}}.

\bibitem{KNN+02}
Atsushi Koike, Shin{-}Ichi Nakano, Takao Nishizeki, Takeshi Tokuyama, and
  Shuhei Watanabe.
\newblock Labeling points with rectangles of various shapes.
\newblock {\em Int. J. Comput. Geometry Appl.}, 12(6):511--528, 2002.
\newblock URL: \url{https://doi.org/10.1142/S0218195902001018}, \href
  {http://dx.doi.org/10.1142/S0218195902001018}
  {\path{doi:10.1142/S0218195902001018}}.

\bibitem{NS07}
Giri Narasimhan and Michiel H.~M. Smid.
\newblock {\em Geometric Spanner Networks}.
\newblock Cambridge University Press, 2007.

\bibitem{Tu69}
William~T. Tutte, editor.
\newblock {\em Recent Progress in Combinatorics}, New York, 1969. Academic
  Press.
\newblock Proceedings of the 3rd Waterloo Conference on Combinatorics, May,
  1968.

\bibitem{KSW99}
Marc~J. van Kreveld, Tycho Strijk, and Alexander Wolff.
\newblock Point labeling with sliding labels.
\newblock {\em Comput. Geom.}, 13(1):21--47, 1999.
\newblock URL: \url{https://doi.org/10.1016/S0925-7721(99)00005-X}, \href
  {http://dx.doi.org/10.1016/S0925-7721(99)00005-X}
  {\path{doi:10.1016/S0925-7721(99)00005-X}}.

\bibitem{YS15}
Hakan Yildiz and Subhash Suri.
\newblock Computing {K}lee's measure of grounded boxes.
\newblock {\em Algorithmica}, 71(2):307--329, 2015.
\newblock URL: \url{https://doi.org/10.1007/s00453-013-9797-9}, \href
  {http://dx.doi.org/10.1007/s00453-013-9797-9}
  {\path{doi:10.1007/s00453-013-9797-9}}.

\end{thebibliography}

\end{document}